\documentclass[PDFlatex,sn-mathphys-ay]{sn-jnl}
\usepackage{graphicx}
\usepackage{multirow}
\usepackage{amsmath,amssymb,amssymb, bm, amsfonts}
\usepackage{amsthm}
 \UseRawInputEncoding
\usepackage{natbib}
\usepackage{mathrsfs}
\usepackage[title]{appendix}
\usepackage{xcolor}
\usepackage{float}
\usepackage{placeins}
\usepackage{textcomp}
\usepackage{manyfoot}
\usepackage{booktabs}
\usepackage{algorithm}
\usepackage{algorithmicx}
\usepackage{algpseudocode}
\usepackage{listings}%

\theoremstyle{thmstyleone}%
\newtheorem{theorem}{Theorem}
\newtheorem{Proposition}[theorem]{Proposition}
\theoremstyle{thmstyletwo}

\newtheorem{remark}{Remark}

\theoremstyle{thmstylethree}%

\begin{document}

\title[NTLE]{Transmuted logistic-exponential distribution - some new properties, estimation methods and application with infectious disease mortality data.}

\author*[1,2]{\fnm{Isqeel} \sur{Ogunsola}}\email{isqeel.ogunsola@manchester.ac.uk}

\author[2]{\fnm{Abosede} \sur{Akintunde }}\email{akintundeaa@funaab.edu.ng}

\author[2]{\fnm{ Kehinde} \sur{Yusuff }}\email{yusuffkm@funaab.edu.ng} 

\author[2]{\fnm{Basirat} \sur{Adetona}}\email{sowolabo@funaab.edu.ng}

\author[2]{\fnm{and Faheez} \sur{Abdulrasaq. }}\email{abdulrasaqfaheez3@gmail.com}

\affil*[1]{\orgdiv{Department of Mathematics}, \orgname{University of Manchester}, {\country{United Kingdom}}}

\affil[2]{\orgdiv{Department of Statistics}, \orgname{Federal University of Agriculture, Abeokuta}, {\country{Nigeria}}}

\abstract{Lately, a New Transmuted Logistic-exponential (NTLE) distribution was introduced and studied as an extension of the Logistic-Exponential Distribution (LED) with wider applicability in lifetime modelling.  However, the maximum likelihood estimates (MLE) of NTLE are not in closed form, and the consistency of the estimates was not examined. Furthermore, some other important properties of NTLE, namely the Shannon entropy, R\'enyi entropy, stochastic ordering, mode, stress-strength reliability measure, residual life functions (mean and reverse), incomplete moments, Bonferroni and Lorenz curves are yet to be derived. Motivated by this, we derived and studied these important properties and evaluated the performance of ten estimation methods (Maximum Likelihood, Moments, Least Squares, Weighted Least Squares, Maximum product of Spacings, Anderson-Darling, Cramer-von Mises, percentile estimation, and Maximum Goodness-of-Fit methods) for NTLE parameters via Monte Carlo simulation using bias, mean square error, and root mean square error as evaluation criteria. Real-life infectious mortality data fitted to the distributions showed that NTLE has a better fit compared to its base distributions (Exponential and Logistic-Exponential). This finding contributes valuable insights for researchers and practitioners when selecting the appropriate estimation methods, especially for NTLE and some similar distributions in non-closed form.
}

\keywords{Properties, Estimation, Non-closed form, Consistency, Monte Carlo, Evaluation criteria.}

\maketitle

\section{Introduction}

Classical distributions such as the exponential and logistic models are widely used in reliability, survival, and lifetime data analysis because of their tractability and interpretability \citep{ogunsola2026new}. However, real data often exhibit compound-complex behaviour that cannot always be captured appropriately by simple models. Consequently, the development of more adaptable probability distributions has become an important area of research in statistics today.

Logistic–exponential (LE) distribution combines structural features of the logistic and exponential distributions, producing a more flexible model which has a wide variety of hazard rate behaviours \citep{lan2008logistic}. Due to its flexibility, the LE distribution has been applied in modelling lifetime data with skewed data and complex hazard functions \citep{hajahmad2024ule, dutta2024bayesian}. Despite its usefulness, the logistic–exponential distribution still lacks sufficient flexibility in certain situations involving heavy-tailed or highly skewed data \citep{adesegun2023tle}. To address such limitations, several extensions of the LE distribution have been proposed in recent years \citep{genc2024half,alharbi2025gostll, mansoor2019marshall, adesegun2023tle}. Of interest in this study is the new transmuted logistic-exponential (NTLE) distribution, an extension proposed by \cite{adesegun2023tle} for modelling lifetime data, which has been demonstrated to be useful in reliability applications. The cumulative distribution function (CDF) and the corresponding probability density function (PDF) of NTLE are given, respectively as:

\begin{equation}
G(y)=
\frac{(e^{\lambda y}-1)^\beta\left[1+\delta+(e^{\lambda y}-1)^\beta\right]}
{\left[1+(e^{\lambda y}-1)^\beta\right]^2}, \quad y>0.
\label{equation CDF}
\end{equation}
and 
\begin{equation}
g(y)=
\frac{\beta\lambda e^{\lambda y}(e^{\lambda y}-1)^{\beta-1}
\left[(1+(e^{\lambda y}-1)^\beta)+\delta(1-(e^{\lambda y}-1)^\beta)\right]}
{\left(1+(e^{\lambda y}-1)^\beta\right)^3}.
\label{equation PDF}
\end{equation}

also, the hazard and survival functions are given respectively as:
$$
h(y)=
\frac{
\beta\lambda e^{\lambda y}(e^{\lambda y}-1)^{\beta-1}
\left[1+\delta+(1-\delta)(e^{\lambda y}-1)^\beta\right]
}{
\left[1+(e^{\lambda y}-1)^\beta\right]
\left[1+(1-\delta)(e^{\lambda y}-1)^\beta\right]
},
$$
and 
$$(y) = \frac{{1 + (e^{\lambda y}  - 1)^\beta  (1 - \delta )}}{{[1 + (e^{\lambda y}  - 1)^\beta  ]^2 }}$$ 
See \citep{adesegun2023tle} for details about NTLE distribution. However, many important properties of NTLE are yet to be studied. This is one of the gaps among others that this study tends to fill.  

Alongside the development of new probability models, the problem of parameter estimation has received significant attention in statistical research \citep{alsadat2024tankedset, habib2024tnhr}. Reliable estimation procedures are essential for determining whether a proposed distribution can adequately represent observed data. Maximum likelihood estimation (MLE) \citep{Fisher1922, casellaBerger2002} remains the most widely used method because of its desirable asymptotic properties, including consistency and efficiency. Nevertheless, likelihood-based estimators may sometimes exhibit instability when the likelihood equations are difficult to solve numerically or when sample sizes are small. For this reason, many studies investigating new probability distributions compare several estimation techniques in order to evaluate their relative performance \citep{ nombebe2023fitting, alabdulhadi2024different}. Recent works on newly proposed lifetime distributions frequently compare MLE with alternative estimators such as least squares, weighted least squares, Anderson–Darling, Cramér–von Mises, and maximum product of spacings estimators \citep{habib2024tnhr,benchiha2025lindley}. 

Minimum-distance estimators such as the least squares (LSE) and weighted least squares (WLSE) estimators are commonly employed when fitting theoretical distribution functions to empirical data \citep{Swain1988}. Several recent studies have applied LSE and WLSE estimators in parameter estimation problems involving newly proposed probability models \citep{pramanik2023xgamma,onuoha2023weibull}. Goodness-of-fit based estimators such as the Anderson–Darling and Cramér–von Mises estimators have also become increasingly popular in the statistical literature \citep{andersonDarling1952, cramer1928}. Recent research has demonstrated that Anderson–Darling and cramér–von Mises estimators perform well in parameter estimation studies involving newly proposed lifetime distributions \citep{habib2024tnhr,  pramanik2023xgamma}. 

Another estimation technique that has attracted attention in recent studies is the maximum product of spacings (MPS) estimator \citep{Luceno2006}. Several simulation studies have shown that MPS estimators often perform comparably to maximum likelihood estimators, particularly when dealing with moderate sample sizes or complex likelihood functions \citep{cheng1979maximum}. Bayesian estimation methods have also gained prominence in recent statistical research. Modern implementations of Bayesian inference typically rely on Markov chain Monte Carlo algorithms, which allow posterior distributions to be approximated efficiently even when analytical solutions are not available \citep{gelman2014bayesian}. 

Although numerous studies have introduced extensions of logistic-type distributions and examined various estimation techniques \citep{mansoor2019marshall,ali2020two}, comprehensive investigations involving a wide range of estimation methods for recently introduced transmuted logistic-exponential distributions parameters are not in existence. In particular, no study has compared likelihood-based, moment-based, goodness-of-fit, and Bayesian estimators simultaneously for the new transmuted logistic-exponential type model (NTLE). 

Motivated by these gaps in the literature, the present study derived some new important properties of NTLE, evaluated and compared methods for its parameter estimation and examined its improvement in modelling skewed distributions compared to its base distributions (exponential and logistic). Specifically, ten estimation methods are considered: maximum likelihood estimation, method of moments estimation, least squares estimation, weighted least squares estimation, maximum product of spacings estimation, Bayesian estimation, Anderson–Darling estimation, Cramér–von Mises estimation, percentile estimation, and maximum goodness-of-fit estimation. A comprehensive Monte Carlo simulation study is conducted to evaluate the performance of these estimators under different sample sizes and parameter values. The findings provide insight into the relative efficiency and robustness of the competing estimation procedures and demonstrate the potential usefulness of the NTLE distribution for modelling lifetime data in practical applications.

\section{Some New properties of NTLE}

In this section, we present several useful analytical properties of the New Transmuted Logistic exponential (NTLE) distribution not previously studied, namely the Shannon entropy, R\'enyi entropy, a stochastic ordering, mode, stress-strength, residual life functions (mean and reverse), incomplete moments, Bonferroni and Lorenz curves. These properties provide insight into its behaviour and practical usefulness in real-world applications.

\subsection{Shannon entropy}

Shannon entropy is one of the most widely used measures of uncertainty in a probability distribution. In lifetime and reliability analysis, Shannon entropy is useful for comparing competing distributions in terms of dispersion and uncertainty \citep{kayid2023entropy, machado2021entropy}. A larger entropy value generally indicates greater unpredictability in the underlying lifetime mechanism.

For a continuous random variable $Y$ with density function $g(y)$, the Shannon entropy is defined by
\begin{equation}
H(Y)=-\int_{0}^{\infty} g(y)\log g(y)\,dy.
\end{equation}

\begin{theorem} 
Let $Y\sim NTLE(\lambda, \beta, \delta)$ with probability density function 
\begin{equation*}
g(y)=
\frac{\beta\lambda e^{\lambda y}(e^{\lambda y}-1)^{\beta-1}}
{\left[1+(e^{\lambda y}-1)^{\beta}\right]^3}
\left[1+\delta+(1-\delta)(e^{\lambda y}-1)^{\beta}\right],
\qquad y>0,
\end{equation*}
where $\lambda>0$, $\beta>0$ and $-1<\delta<1$. Then, the Shannon entropy is in the form
\begin{equation}
H(Y)
=
2-\log(\beta\lambda)-\frac{\delta}{\beta}-j_{\beta,\delta}-K_{\delta},
\label{eq:shannon_main_NTLE}
\end{equation}
where
\begin{equation}
j_{\beta,\delta}
=
\int_{0}^{1}
(1+\delta-2\delta v)
\log\!\left[
1+\left(\frac{v}{1-v}\right)^{1/\beta}
\right]\,dv,
\label{eq:jbd_NTLE}
\end{equation}
and
\begin{equation}
K_{\delta}
=
\int_{0}^{1}
(1+\delta-2\delta v)\log(1+\delta-2\delta v)\,dv.
\label{eq:Kd_int_NTLE}
\end{equation}
For $\delta\neq 0$, the term $K_{\delta}$ has the closed form
\begin{equation}
K_{\delta}
=
\frac{(1+\delta)^2\log(1+\delta)-(1-\delta)^2\log(1-\delta)}{4\delta}
-\frac{1}{2},
\label{eq:Kd_closed_NTLE}
\end{equation}
while, by implication,
\begin{equation*}
K_{0}=0.
\end{equation*}
as a result, for $\delta\neq 0$,
\begin{equation}
H(Y)
=
\frac{5}{2}
-\log(\beta\lambda)
-\frac{\delta}{\beta}
-j_{\beta,\delta}
-
\frac{(1+\delta)^2\log(1+\delta)-(1-\delta)^2\log(1-\delta)}{4\delta}.
\label{eq:shannon_closed_NTLE}
\end{equation}
\end{theorem}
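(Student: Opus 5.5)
The plan is to use a probability-integral-type substitution that turns $Y$ into a variable on $(0,1)$ with a simple linear density. I then expand $\log g$ in terms of that variable and take expectations term by term.

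Put $t=(e^{\lambda y}-1)^{\beta}$ and $v=t/(1+t)$. Then $1+t=(1-v)^{-1}$ and $e^{\lambda y}-1=\left(\frac{v}{1-v}\right)^{1/\beta}$, so
\begin{equation*}
\lambda y=\log\!\left[1+\left(\tfrac{v}{1-v}\right)^{1/\beta}\right].
\end{equation*}
Rewriting the CDF (\ref{equation CDF}) gives $G(y)=v+\delta v(1-v)$. Hence $V=T/(1+T)$ has density
\begin{equation*}
f_V(v)=1+\delta-2\delta v, \qquad 0<v<1 .
\end{equation*}
I will also check the identity
\begin{equation*}
1+\delta+(1-\delta)t=(1+t)(1+\delta-2\delta v).
\end{equation*}
It lets one factor of $(1+t)$ cancel against the cube in the denominator of $g$.

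With this identity,
\begin{equation*}
\log g(y)=\log(\beta\lambda)+\lambda y+\frac{\beta-1}{\beta}\bigl[\log v-\log(1-v)\bigr]+2\log(1-v)+\log(1+\delta-2\delta v).
\end{equation*}
Since $H(Y)=-E[\log g(Y)]$ and every term is now a function of $V$, I integrate each against $f_V$ over $(0,1)$:
\begin{itemize}
\item the term $\lambda y$ gives exactly $j_{\beta,\delta}$ in (\ref{eq:jbd_NTLE});
\item the last term gives exactly $K_\delta$ in (\ref{eq:Kd_int_NTLE});
\item using $\int_0^1 v\log v\,dv=-1/4$, I get $E[\log V]=-1-\delta/2$;
\item by the reflection $w=1-v$, I get $E[\log(1-V)]=-1+\delta/2$.
\end{itemize}
Therefore
\begin{equation*}
\frac{\beta-1}{\beta}\,E\!\left[\log\frac{V}{1-V}\right]+2E[\log(1-V)]=-\frac{(\beta-1)\delta}{\beta}-2+\delta=\frac{\delta}{\beta}-2 .
\end{equation*}
Collecting terms yields $E\log g=\log(\beta\lambda)+j_{\beta,\delta}+\delta/\beta-2+K_\delta$, which is (\ref{eq:shannon_main_NTLE}) after the sign change.

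For the closed form (\ref{eq:Kd_closed_NTLE}) with $\delta\neq0$, I substitute $w=1+\delta-2\delta v$, so that $K_\delta=\frac{1}{2\delta}\int_{1-\delta}^{1+\delta}w\log w\,dw$, and use the antiderivative $\frac{w^2}{2}\log w-\frac{w^2}{4}$. The quadratic part contributes $-\frac{(1+\delta)^2-(1-\delta)^2}{8\delta}=-\frac12$. Then $K_0=0$ is immediate because the integrand of (\ref{eq:Kd_int_NTLE}) vanishes at $\delta=0$. Substituting into (\ref{eq:shannon_main_NTLE}) gives (\ref{eq:shannon_closed_NTLE}).

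The main obstacle is spotting the factorization of $1+\delta+(1-\delta)t$ and the resulting linear density of $V$; after that every step is elementary. I also need to check integrability at $v=0,1$, including the logarithmic singularities and the growth of $j_{\beta,\delta}$ near $v=1$, to justify splitting the expectation into separate terms.
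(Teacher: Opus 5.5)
Your proposal is correct and follows essentially the same route as the paper. Both use the substitution $u=(e^{\lambda y}-1)^\beta$, $v=u/(1+u)$, which turns $g(y)\,dy$ into the weight $(1+\delta-2\delta v)\,dv$; the identity $1+\delta+(1-\delta)u=(1+\delta-2\delta v)/(1-v)$; the same elementary log-integrals; and the substitution $w=1+\delta-2\delta v$ for $K_\delta$. The differences are cosmetic: you read the weight as the law of $V$ (via $G=v+\delta v(1-v)$) and cancel one factor of $1+u$ before integrating, whereas the paper keeps four separate integrals $I_1,\dots,I_4$ and combines $-I_3+3I_4$ at the end.
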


\begin{proof}
By definition,
\begin{equation*}
H(Y)=-\int_{0}^{\infty} g(y)\log g(y)\,dy.
\end{equation*}
Let, 
\begin{equation*}
u=(e^{\lambda y}-1)^\beta.
\end{equation*}
Then, we have; 
\begin{equation*}
e^{\lambda y}=1+u^{1/\beta},
\qquad
du=\beta\lambda e^{\lambda y}(e^{\lambda y}-1)^{\beta-1}dy,
\end{equation*}
so that
\begin{equation*}
g(y)\,dy
=
\frac{1+\delta+(1-\delta)u}{(1+u)^3}\,du.
\label{eq:gdy_u_NTLE}
\end{equation*}
and,
\begin{equation}
g(y)
=
\frac{\beta\lambda(1+u^{1/\beta})u^{(\beta-1)/\beta}\,[1+\delta+(1-\delta)u]}
{(1+u)^3}.
\end{equation}
Hence
\begin{align}
H(Y)
&=
-\int_{0}^{\infty}
\frac{1+\delta+(1-\delta)u}{(1+u)^3}
\log\!\left[
\frac{\beta\lambda(1+u^{1/\beta})u^{(\beta-1)/\beta}[1+\delta+(1-\delta)u]}
{(1+u)^3}
\right]du
\nonumber\\
&=
-\log(\beta\lambda)
-I_{1}
-\frac{\beta-1}{\beta}I_{2}
-I_{3}
+3I_{4},
\label{eq:entropy_split_NTLE}
\end{align}
where
\begin{align*}
I_{1}
&=
\int_{0}^{\infty}
\frac{1+\delta+(1-\delta)u}{(1+u)^3}
\log(1+u^{1/\beta})\,du,
\\
I_{2}
&=
\int_{0}^{\infty}
\frac{1+\delta+(1-\delta)u}{(1+u)^3}
\log u\,du,
\\
I_{3}
&=
\int_{0}^{\infty}
\frac{1+\delta+(1-\delta)u}{(1+u)^3}
\log\!\bigl(1+\delta+(1-\delta)u\bigr)\,du,
\\
I_{4}
&=
\int_{0}^{\infty}
\frac{1+\delta+(1-\delta)u}{(1+u)^3}
\log(1+u)\,du.
\end{align*}

Using the transformation,
\begin{equation}
v=\frac{u}{1+u},
\qquad
u=\frac{v}{1-v},
\qquad
du=\frac{dv}{(1-v)^2},
\qquad 0<v<1.
\end{equation}
Simplifying, we obtain
\begin{equation}
\frac{1+\delta+(1-\delta)u}{(1+u)^3}\,du
=
(1+\delta-2\delta v)\,dv.
\label{eq:weight_v_NTLE}
\end{equation}
Evaluating $I_{1}, I_{2}, I_{3}, I_{4}$, we have; 

For $I_{2}$, we have 
\[
\log u=\log v-\log(1-v),
\]

\begin{equation*}
I_{2}
=
\int_{0}^{1}(1+\delta-2\delta v)\log v\,dv
-
\int_{0}^{1}(1+\delta-2\delta v)\log(1-v)\,dv.
\end{equation*}
Using the standard integrals
\[
\int_{0}^{1}\log v\,dv=-1,
\qquad
\int_{0}^{1}v\log v\,dv=-\frac14,
\qquad
\int_{0}^{1}\log(1-v)\,dv=-1,
\qquad
\int_{0}^{1}v\log(1-v)\,dv=-\frac34,
\]
we obtain
\begin{equation*}
I_{2}=-\delta.
\label{eq:i2_NTLE}
\end{equation*}

also, for $I_{4}$,
\[
\log(1+u)=-\log(1-v),
\]
and
\begin{equation*}
I_{4}
=
-\int_{0}^{1}(1+\delta-2\delta v)\log(1-v)\,dv
=
1-\frac{\delta}{2}.
\label{eq:i4_NTLE}
\end{equation*}

For $I_{3}$, note that
\[
1+\delta+(1-\delta)u
=
\frac{1+\delta-2\delta v}{1-v},
\]
Hence, 
\begin{align*}
I_{3}
&=
\int_{0}^{1}(1+\delta-2\delta v)
\log(1+\delta-2\delta v)\,dv
-
\int_{0}^{1}(1+\delta-2\delta v)\log(1-v)\,dv
\nonumber\\
&=
K_{\delta}
+
1-\frac{\delta}{2},
\label{eq:i3_NTLE}
\end{align*}
where $K_{\delta}$ is given by \eqref{eq:Kd_int_NTLE}.

For $\delta\neq 0$,
let $t=1+\delta-2\delta v$. Then $dt=-2\delta\,dv$, and
\begin{align}
K_{\delta}
&=
\int_{0}^{1}(1+\delta-2\delta v)\log(1+\delta-2\delta v)\,dv
\nonumber\\
&=
\frac{1}{2\delta}\int_{1-\delta}^{1+\delta} t\log t\,dt
\nonumber\\
&=
\frac{1}{2\delta}
\left[
\frac{t^{2}}{2}\log t-\frac{t^{2}}{4}
\right]_{1-\delta}^{1+\delta}
\nonumber\\
&=
\frac{(1+\delta)^2\log(1+\delta)-(1-\delta)^2\log(1-\delta)}{4\delta}
-\frac12.
\end{align}
This proves \eqref{eq:Kd_closed_NTLE}. The case $\delta=0$ follows by continuity, giving $K_{0}=0$.

Finally, from \eqref{eq:weight_v_NTLE},
\begin{equation}
I_{1}
=
\int_{0}^{1}
(1+\delta-2\delta v)
\log\!\left[
1+\left(\frac{v}{1-v}\right)^{1/\beta}
\right]dv
=
j_{\beta,\delta}.
\label{eq:i1_NTLE}
\end{equation}

Substituting  $I_{1}, I_{2}, I_{3}, I_{4}$, into \eqref{eq:entropy_split_NTLE} gives
\[
H(Y)
=
-\log(\beta\lambda)
-j_{\beta,\delta}
-\frac{\beta-1}{\beta}(-\delta)
-\left(K_{\delta}+1-\frac{\delta}{2}\right)
+3\left(1-\frac{\delta}{2}\right).
\]
Simplifying further, we obtain
\[
H(Y)=2-\log(\beta\lambda)-\frac{\delta}{\beta}-j_{\beta,\delta}-K_{\delta},
\]
which is \eqref{eq:shannon_main_NTLE}. Substituting the closed form
\eqref{eq:Kd_closed_NTLE} gives \eqref{eq:shannon_closed_NTLE}.
\end{proof}

\begin{remark}
When $\delta=0$, the distribution reduces to the baseline Logistic-Exponential distribution. Therefore,
\[
K_0=\int_0^1 \log(1)\,dv=0.
\]
As result, the Shannon entropy reduces to
\[
H(Y)=2-\log(\beta\lambda)-j_{\beta,0},
\]
where
\[
j_{\beta,0}=
\int_0^1
\log\!\left[1+\left(\frac{v}{1-v}\right)^{1/\beta}\right]dv .
\]
This expression represents the Shannon entropy of the logistic--exponential distribution with parameters $\lambda$ and $\beta$. The integral $j_{\beta,0}$ does not admit a closed form and may be evaluated numerically for given parameter values.
\end{remark}

\subsection{R\'enyi entropy}

Unlike the classical Shannon entropy, R\'enyi entropy introduces an order parameter that allows different aspects of uncertainty to be emphasized. It has found applications in signal processing, ecology, and machine learning where the degree of uncertainty in complex systems must be quantified \citep{jizba2022}. 

\begin{Proposition}
Let $Y\sim NTLE(\lambda,\beta,\delta)$ with $\lambda>0$, $\beta>0$, and $-1<\delta<1$. For $\rho>0$ and $\rho\neq1$, the R\'enyi entropy of $Y$ is defined by
\[
H_{\rho}(Y)=\frac{1}{1-\rho}\log\left(\int_0^\infty g(y)^\rho\,dy\right),
\]
provided that the integral $\int_0^\infty g(y)^\rho\,dy$ exists.

If $\rho=m$ is an integer with $m\ge2$, then
\begin{align}
\int_{0}^{\infty} g(y)^m\,dy
&=
(\beta\lambda)^{m-1}
\sum_{j=0}^{m-1}\sum_{k=0}^{m}
\binom{m-1}{j}\binom{m}{k}
(1+\delta)^{m-k}(1-\delta)^k
\nonumber\\
&\quad\times
B\!\left(
1+k+\frac{(m-1)(\beta-1)+j}{\beta},
\,
3m-1-k-\frac{(m-1)(\beta-1)+j}{\beta}
\right),
\end{align}
whenever
\[
1+k+\frac{(m-1)(\beta-1)+j}{\beta}>0
\quad\text{and}\quad
3m-1-k-\frac{(m-1)(\beta-1)+j}{\beta}>0
\]
for all $j=0,1,\dots,m-1$ and $k=0,1,\dots,m$.

Consequently,
\begin{align}
H_{m}(Y)
&=
\frac{1}{1-m}
\log\Bigg[
(\beta\lambda)^{m-1}
\sum_{j=0}^{m-1}\sum_{k=0}^{m}
\binom{m-1}{j}\binom{m}{k}
(1+\delta)^{m-k}(1-\delta)^k
\nonumber\\
&\qquad\qquad\qquad\times
B\!\left(
1+k+\frac{(m-1)(\beta-1)+j}{\beta},
\,
3m-1-k-\frac{(m-1)(\beta-1)+j}{\beta}
\right)
\Bigg].
\end{align}
\end{Proposition}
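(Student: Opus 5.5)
The plan is to reuse the substitution from the proof of the Shannon entropy theorem, $u=(e^{\lambda y}-1)^{\beta}$. There we obtained
\[
g(y)\,dy=\frac{1+\delta+(1-\delta)u}{(1+u)^3}\,du,
\qquad
g(y)=\frac{\beta\lambda(1+u^{1/\beta})u^{(\beta-1)/\beta}[1+\delta+(1-\delta)u]}{(1+u)^3}.
\]
Writing $g(y)^m\,dy=g(y)^{m-1}\cdot g(y)\,dy$ and substituting gives
\[
\int_0^\infty g(y)^m\,dy=(\beta\lambda)^{m-1}\int_0^\infty
\frac{(1+u^{1/\beta})^{m-1}\,u^{(m-1)(\beta-1)/\beta}\,[1+\delta+(1-\delta)u]^m}{(1+u)^{3m}}\,du .
\]
The map $y\mapsto u$ is a smooth increasing bijection of $(0,\infty)$ onto itself, so this change of variables is legitimate.

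Since $m$ is an integer, both powers in the numerator are finite sums. The binomial theorem gives
\[
(1+u^{1/\beta})^{m-1}=\sum_{j=0}^{m-1}\binom{m-1}{j}u^{j/\beta},
\qquad
[(1+\delta)+(1-\delta)u]^m=\sum_{k=0}^{m}\binom{m}{k}(1+\delta)^{m-k}(1-\delta)^k u^k .
\]
Because $-1<\delta<1$, every coefficient is nonnegative. Hence interchanging the finite double sum with the integral is immediate; Tonelli's theorem applies even if some term diverges. The integral therefore reduces to a double sum of terms of the form
\[
\int_0^\infty \frac{u^{a-1}}{(1+u)^{3m}}\,du,
\qquad
a=1+k+\frac{(m-1)(\beta-1)+j}{\beta}.
\]

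Each such term is evaluated by the standard Beta integral of the second kind, $\int_0^\infty u^{a-1}(1+u)^{-(a+b)}\,du=B(a,b)$, with $b=3m-a$. This $b$ is exactly the second argument in the statement. The formula is obtained through the substitution $v=u/(1+u)$, the same one used for the Shannon entropy.

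This step is where care is needed. The integral is finite if and only if $a>0$ and $b>0$, which are precisely the stated conditions. Since all terms are nonnegative, finiteness of every term is equivalent to finiteness of $\int_0^\infty g^m\,dy$. So under those conditions the integral exists and equals the displayed double sum.

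Finally, substituting this expression into the definition of $H_\rho(Y)$ with $\rho=m$ gives the formula for $H_m(Y)$. I expect no real obstacle beyond the exponent bookkeeping for $a$. I would double-check that the exponent of $u$ in each term is $k+\frac{(m-1)(\beta-1)+j}{\beta}=a-1$, and that the total power $3m$ of $1+u$ splits as $a+b$.
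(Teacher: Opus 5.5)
Your proposal is correct and follows essentially the same route as the paper: the substitution $u=(e^{\lambda y}-1)^{\beta}$, the finite binomial expansions of $(1+u^{1/\beta})^{m-1}$ and $[1+\delta+(1-\delta)u]^{m}$, and the Beta integral of the second kind with $a+b=3m$. Your extra remark that the stated conditions are also necessary is a small refinement the paper leaves implicit. For that equivalence you should say the coefficients $(1+\delta)^{m-k}(1-\delta)^{k}$ are \emph{strictly} positive for $-1<\delta<1$, not merely nonnegative.
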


\begin{proof}
By definition, the R\'enyi entropy of order $\rho>0$, $\rho\neq1$, is
\[
H_{\rho}(Y)=\frac{1}{1-\rho}\log\left(\int_{0}^{\infty} g(y)^{\rho}\,dy\right).
\]

To evaluate the quantity $\int_{0}^{\infty} g(y)^{\rho}\,dy$, let
\[
u=(e^{\lambda y}-1)^\beta,\qquad 0<u<\infty.
\]
Then
\[
e^{\lambda y}=1+u^{1/\beta},
\]
and
\[
du=\beta\lambda e^{\lambda y}(e^{\lambda y}-1)^{\beta-1}\,dy
=\beta\lambda(1+u^{1/\beta})u^{(\beta-1)/\beta}\,dy.
\]
Hence,
\[
dy=
\frac{du}{\beta\lambda(1+u^{1/\beta})u^{(\beta-1)/\beta}}.
\]

Under this transformation, the density becomes
\[
g(y)=
\frac{\beta\lambda(1+u^{1/\beta})u^{(\beta-1)/\beta}\left[1+\delta+(1-\delta)u\right]}{(1+u)^3}.
\]
Therefore,
\begin{align*}
g(y)^\rho
&=
(\beta\lambda)^\rho
(1+u^{1/\beta})^\rho
u^{\frac{(\beta-1)\rho}{\beta}}
\left(1+\delta+(1-\delta)u\right)^\rho
(1+u)^{-3\rho},
\end{align*}
and so
\begin{align*}
g(y)^\rho\,dy
&=
(\beta\lambda)^{\rho-1}
u^{\frac{(\beta-1)(\rho-1)}{\beta}}
(1+u^{1/\beta})^{\rho-1}
\left(1+\delta+(1-\delta)u\right)^\rho
(1+u)^{-3\rho}\,du.
\end{align*}
Integrating both sides over $(0,\infty)$ gives
\[
\int_{0}^{\infty} g(y)^{\rho}\,dy
=
(\beta\lambda)^{\rho-1}
\int_{0}^{\infty}
u^{\frac{(\beta-1)(\rho-1)}{\beta}}
(1+u^{1/\beta})^{\rho-1}
\left(1+\delta+(1-\delta)u\right)^{\rho}
(1+u)^{-3\rho}\,du.
\]

Now consider the special case $\rho=m$, where $m\ge2$ is an integer. Since $m-1$ and $m$ are nonnegative integers, we may use the finite binomial expansions
\[
(1+u^{1/\beta})^{m-1}
=
\sum_{j=0}^{m-1}\binom{m-1}{j}u^{j/\beta},
\]
and
\[
\left(1+\delta+(1-\delta)u\right)^m
=
\sum_{k=0}^{m}\binom{m}{k}(1+\delta)^{m-k}(1-\delta)^k u^k.
\]
Substituting these into the previous integral yields
\begin{align*}
\int_{0}^{\infty} g(y)^m\,dy
&=
(\beta\lambda)^{m-1}
\int_{0}^{\infty}
u^{\frac{(\beta-1)(m-1)}{\beta}}
\left(\sum_{j=0}^{m-1}\binom{m-1}{j}u^{j/\beta}\right)
\left(\sum_{k=0}^{m}\binom{m}{k}(1+\delta)^{m-k}(1-\delta)^k u^k\right)
(1+u)^{-3m}\,du
\\
&=
(\beta\lambda)^{m-1}
\sum_{j=0}^{m-1}\sum_{k=0}^{m}
\binom{m-1}{j}\binom{m}{k}
(1+\delta)^{m-k}(1-\delta)^k
\int_{0}^{\infty}
u^{k+\frac{(m-1)(\beta-1)+j}{\beta}}
(1+u)^{-3m}\,du.
\end{align*}

Set
\[
a=
1+k+\frac{(m-1)(\beta-1)+j}{\beta}.
\]
Then
\[
a-1=
k+\frac{(m-1)(\beta-1)+j}{\beta},
\]
and therefore
\[
3m-a=
3m-1-k-\frac{(m-1)(\beta-1)+j}{\beta}.
\]
Hence,
\[
\int_{0}^{\infty}
u^{k+\frac{(m-1)(\beta-1)+j}{\beta}}
(1+u)^{-3m}\,du
=
\int_{0}^{\infty}u^{a-1}(1+u)^{-3m}\,du.
\]
Using the beta integral identity
\[
\int_{0}^{\infty}u^{a-1}(1+u)^{-(a+b)}\,du
=
B(a,b),
\qquad a>0,\quad b>0,
\]
with
\[
b=3m-a
=
3m-1-k-\frac{(m-1)(\beta-1)+j}{\beta},
\]
we obtain
\begin{align*}
\int_{0}^{\infty}
u^{k+\frac{(m-1)(\beta-1)+j}{\beta}}
(1+u)^{-3m}\,du
&=
B\!\left(
1+k+\frac{(m-1)(\beta-1)+j}{\beta},
\,
3m-1-k-\frac{(m-1)(\beta-1)+j}{\beta}
\right),
\end{align*}
provided that
\[
1+k+\frac{(m-1)(\beta-1)+j}{\beta}>0
\quad\text{and}\quad
3m-1-k-\frac{(m-1)(\beta-1)+j}{\beta}>0.
\]
Therefore,
\begin{align*}
\int_{0}^{\infty} g(y)^m\,dy
&=
(\beta\lambda)^{m-1}
\sum_{j=0}^{m-1}\sum_{k=0}^{m}
\binom{m-1}{j}\binom{m}{k}
(1+\delta)^{m-k}(1-\delta)^k
\\
&\quad\times
B\!\left(
1+k+\frac{(m-1)(\beta-1)+j}{\beta},
\,
3m-1-k-\frac{(m-1)(\beta-1)+j}{\beta}
\right).
\end{align*}

Finally, substituting this expression into the definition of the R\'enyi entropy gives
\begin{align*}
H_m(Y)
&=
\frac{1}{1-m}
\log\Bigg[
(\beta\lambda)^{m-1}
\sum_{j=0}^{m-1}\sum_{k=0}^{m}
\binom{m-1}{j}\binom{m}{k}
(1+\delta)^{m-k}(1-\delta)^k
\\
&\qquad\qquad\qquad\times
B\!\left(
1+k+\frac{(m-1)(\beta-1)+j}{\beta},
\,
3m-1-k-\frac{(m-1)(\beta-1)+j}{\beta}
\right)
\Bigg],
\end{align*}
\end{proof}
\subsection{Stochastic ordering}
Stochastic ordering provides a mathematical framework for comparing two random variables in terms of their magnitude, reliability, or variability \citep{shaked2007,belzunce2022}. If one distribution is stochastically larger than another, it tends to produce larger values and may therefore represent a more reliable system.

\begin{Proposition}
Let
$Y_1\sim NTLE(\lambda_1,\beta,\delta_1),\qquad
Y_2\sim NTLE(\lambda_2,\beta,\delta_2),$
where $\beta>0$ is common to both distributions. If $\lambda_1\ge\lambda_2$ and $\delta_1\ge\delta_2$, then $Y_1\le_{st}Y_2.$
\end{Proposition}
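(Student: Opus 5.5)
Throughout I use the convention that $Y_1\le_{st}Y_2$ means $G_1(y)\ge G_2(y)$ for all $y>0$, equivalently $\bar G_1\le\bar G_2$. The plan is to write the CDF \eqref{equation CDF} as a composition
\[
G(y)=\Phi_\delta\bigl(u(y)\bigr),\qquad u(y)=(e^{\lambda y}-1)^\beta,\qquad \Phi_\delta(u)=\frac{u(1+\delta+u)}{(1+u)^2},\quad u>0.
\]
This lets me separate the effect of $\lambda$, which enters only through $u$, from the effect of $\delta$, which enters only through $\Phi_\delta$. I will then pass from $(\lambda_2,\delta_2)$ to $(\lambda_1,\delta_1)$ in two steps and chain the inequalities.

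\textbf{Step 1: monotonicity in $\delta$ for fixed $u$.} We have $\partial\Phi_\delta/\partial\delta=u/(1+u)^2>0$. Hence $\delta_1\ge\delta_2$ gives $\Phi_{\delta_1}(u)\ge\Phi_{\delta_2}(u)$ for every $u>0$.

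\textbf{Step 2: monotonicity in $u$ for fixed $\delta$.} A direct differentiation gives
\[
\Phi_\delta'(u)=\frac{1+\delta+(1-\delta)u}{(1+u)^3},
\]
which is exactly the weight already obtained in the proof of the Shannon entropy theorem ($g(y)\,dy$ expressed in terms of $u$). It is strictly positive because $|\delta|<1$. Also, for fixed $y>0$ and $\beta>0$, the map $\lambda\mapsto(e^{\lambda y}-1)^\beta$ is increasing. So $\lambda_1\ge\lambda_2$ gives $u_1(y)\ge u_2(y)$, and therefore $\Phi_{\delta}(u_1(y))\ge\Phi_{\delta}(u_2(y))$ for any admissible $\delta$.

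\textbf{Step 3: combine the two steps.} For every $y>0$,
\[
G_1(y)=\Phi_{\delta_1}(u_1(y))\ge\Phi_{\delta_1}(u_2(y))\ge\Phi_{\delta_2}(u_2(y))=G_2(y).
\]
The first inequality uses Step 2 with $\delta=\delta_1$, and the second uses Step 1 with $u=u_2(y)$. This gives $\bar G_1(y)\le\bar G_2(y)$, that is, $Y_1\le_{st}Y_2$.

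The only point needing care is Step 2, namely the sign of $\Phi_\delta'$. Positivity holds for all $u>0$ only because the parameter constraint $-1<\delta<1$ makes both $1+\delta$ and $1-\delta$ positive. I would state this explicitly, since without it $\Phi_\delta$ need not be a valid distribution function. Everything else is elementary differentiation, and the common $\beta$ is used only so that the same $u$-transformation applies to both variables.
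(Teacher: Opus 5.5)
Your proof is correct and follows the paper's argument: the paper also writes $G_r(y)=\phi(u_r(y),\delta_r)$ with $\phi(u,\delta)=u(1+\delta+u)/(1+u)^2$, shows $\partial\phi/\partial u=(1+\delta+(1-\delta)u)/(1+u)^3>0$ and $\partial\phi/\partial\delta=u/(1+u)^2>0$, and uses $u_1(y)\ge u_2(y)$ to conclude $G_1\ge G_2$. Your intermediate term $\Phi_{\delta_1}(u_2(y))$ just writes out the chaining that the paper leaves implicit when it says $\phi$ is increasing in both arguments.
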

\begin{proof}
Using the CDF of NTLE in equation \ref{equation CDF}, we can write 
\begin{equation}
G_r(y)=
\frac{u_r(y)\left[1+\delta_r+u_r(y)\right]}
{\left[1+u_r(y)\right]^2},
\qquad
u_r(y)=\left(e^{\lambda_r y}-1\right)^\beta,
\qquad r=1,2.
\end{equation}
Let
\[
\phi(u,\delta)=\frac{u(1+\delta+u)}{(1+u)^2}, \qquad u>0.
\]
Then, 
\[
G_r(y)=\phi(u_r(y),\delta_r).
\]
Differentiating $\phi$ with respect to $u$ and $\delta$ gives
\[
\frac{\partial \phi(u,\delta)}{\partial u}
=
\frac{1+\delta+(1-\delta)u}{(1+u)^3}>0,
\qquad -1<\delta<1,\; u>0,
\]
and
\[
\frac{\partial \phi(u,\delta)}{\partial \delta}
=
\frac{u}{(1+u)^2}>0.
\]
Hence, $\phi(u,\delta)$ is increasing in both $u$ and $\delta$.

Now, if $\lambda_1\ge \lambda_2$, then for every $y>0$,
\[
u_1(y)=\left(e^{\lambda_1 y}-1\right)^\beta
\ge
\left(e^{\lambda_2 y}-1\right)^\beta
=u_2(y).
\]
Together with $\delta_1\ge \delta_2$, this gives
\[
G_1(y)=\phi(u_1(y),\delta_1)\ge \phi(u_2(y),\delta_2)=G_2(y),
\qquad y>0.
\]
Therefore,
\[
\overline{G}_1(y)\le \overline{G}_2(y), \qquad y>0,
\]
which is equivalent to
\[
Y_1\le_{st} Y_2.
\]
where $\overline{G}_r(y) = 1- G_r(y)$
\end{proof}

\begin{remark}
The above proposition shows that, for fixed $\beta$, the NTLE distribution becomes stochastically smaller as either $\lambda$ or $\delta$ increases.
\end{remark}

\subsection{Mode of the NTLE distribution}
In lifetime analysis, the mode helps identify the most likely failure time of a component or system. It is particularly useful when distributions are skewed, as the mode may provide a more meaningful measure of central tendency than the mean. In practice, the mode can help engineers and reliability analysts determine the most likely time at which failures or events occur.

\begin{theorem}
Let $Y\sim NTLE(\lambda,\beta,\delta)$. Then the mode of the distribution is characterised as follows:

\begin{enumerate}
\item If $0<\beta<1$, then the density is unbounded at the origin and the mode is $y=0$.

\item If $\beta>1$, then the mode is the interior point
\[
y_m=\frac{1}{\lambda}\log(1+x_m),
\]
where $x_m>0$ is a positive root of
\begin{equation}
\frac{1}{1+x}
+\frac{\beta-1}{x}
+\frac{\beta(1-\delta)x^{\beta-1}}{1+\delta+(1-\delta)x^\beta}
-\frac{3\beta x^{\beta-1}}{1+x^\beta}
=0.
\end{equation}

\item If $\beta=1$, then
\[
x_m=-\frac{1+3\delta}{1-\delta}.
\]
Hence, if $\delta<-\tfrac{1}{3}$,
\[
y_m=\frac{1}{\lambda}\log\left(\frac{-4\delta}{1-\delta}\right),
\]
whereas for $\delta\ge -\tfrac{1}{3}$ the mode occurs at $y=0$.
\end{enumerate}
\end{theorem}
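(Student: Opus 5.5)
The plan is to work with the log-density after the substitution already used for the Shannon and R\'enyi entropies, but with the simpler variable $x=e^{\lambda y}-1$ instead of $u=x^\beta$. Since $y\mapsto x$ is a smooth increasing bijection of $(0,\infty)$ onto itself with $dx/dy=\lambda(1+x)>0$, the function $y\mapsto g(y)$ has a critical point at $y$ if and only if
\[
\psi(x)=\log\!\Big[\beta\lambda(1+x)x^{\beta-1}\big(1+\delta+(1-\delta)x^{\beta}\big)\big(1+x^\beta\big)^{-3}\Big]
\]
has one at $x$, and maxima correspond. Differentiating term by term gives
\[
\psi'(x)=\frac{1}{1+x}+\frac{\beta-1}{x}+\frac{\beta(1-\delta)x^{\beta-1}}{1+\delta+(1-\delta)x^\beta}-\frac{3\beta x^{\beta-1}}{1+x^\beta},
\]
which is exactly the equation in part 2. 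In every case I will translate a mode $x_m$ back via $y_m=\lambda^{-1}\log(1+x_m)$.

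The boundary behaviour settles parts 1 and 2. As $x\to0^+$, $g\sim\beta\lambda(1+\delta)x^{\beta-1}$, and $1+\delta>0$ because $\delta>-1$. Hence for $0<\beta<1$ the density blows up at the origin, and the mode is $y=0$ in the supremum sense. For $\beta>1$ the same asymptotic gives $g\to0$ at the origin. As $x\to\infty$, $g\sim\beta\lambda(1-\delta)x^{2\beta}/x^{3\beta}\to0$, using $1-\delta>0$. A continuous positive function on $(0,\infty)$ that tends to zero at both ends attains an interior maximum. At that point $\psi'(x_m)=0$, which proves the existence claim in part 2. I will not try to prove uniqueness of the root. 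The statement only asserts that $x_m$ is a positive root, and if several roots occur, the mode is the root maximising $g$. Establishing monotonicity of $\psi'$ is where I expect the main obstacle if a sharper result were wanted, so I will deliberately avoid it.

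For $\beta=1$ everything becomes explicit. The factor $x^{\beta-1}$ disappears and one $(1+x)$ cancels, so $g\propto(1+\delta+(1-\delta)x)/(1+x)^2$. Setting the derivative to zero gives $(1-\delta)(1+x)=2\big(1+\delta+(1-\delta)x\big)$, that is, $x_m=-(1+3\delta)/(1-\delta)$. The sign of the derivative is the sign of the linear function $-(1-\delta)x-(1+3\delta)$. So $g$ first increases and then decreases exactly when $x_m>0$, i.e. when $\delta<-1/3$. In that case $1+x_m=-4\delta/(1-\delta)$, which yields the stated $y_m$. When $\delta\ge-1/3$ the derivative is nonpositive on $(0,\infty)$, so $g$ is decreasing and the mode is at $y=0$.
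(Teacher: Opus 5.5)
Your proposal is correct and follows essentially the same route as the paper: the substitution $x=e^{\lambda y}-1$, the log-derivative giving the modal equation, the small-$y$ asymptotic $g\sim\beta\lambda(1+\delta)x^{\beta-1}$ for parts 1--2, and the explicit solution of the resulting linear equation when $\beta=1$. Your two additions tighten steps the paper only asserts. The tail limit $g\sim\beta\lambda(1-\delta)x^{-\beta}\to 0$ is what actually guarantees an interior maximiser exists for $\beta>1$. The sign analysis of $-(1-\delta)x-(1+3\delta)$ shows that $x_m$ is a maximum and that $g$ is decreasing when $\delta\ge -1/3$.
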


\begin{proof}
Let
\[
x=e^{\lambda y}-1, \qquad x>0.
\]
Then
\[
y=\frac{1}{\lambda}\log(1+x),
\]
and the density becomes
\[
g(y(x))
=
\beta\lambda (1+x)x^{\beta-1}
\frac{1+\delta+(1-\delta)x^\beta}{(1+x^\beta)^3}.
\]
Since $x=e^{\lambda y}-1$ is a strictly increasing function of $y$, maximizing $g(y)$ over $y>0$ is equivalent to maximizing $g(y(x))$ over $x>0$.

consider
\[
\ell(x)=\log g(y(x)).
\]
Then
\[
\ell(x)
=
\log(\beta\lambda)+\log(1+x)+(\beta-1)\log x
+\log\left[1+\delta+(1-\delta)x^\beta\right]
-3\log(1+x^\beta).
\]
Differentiating,
\[
\frac{d\ell(x)}{dx}
=
\frac{1}{1+x}
+\frac{\beta-1}{x}
+\frac{\beta(1-\delta)x^{\beta-1}}{1+\delta+(1-\delta)x^\beta}
-\frac{3\beta x^{\beta-1}}{1+x^\beta}.
\]
Therefore, any interior mode must satisfy
\[
\frac{d\ell(x)}{dx}=0,
\]
which yields the stated nonlinear equation.

Next, as $y\downarrow 0$,
\[
e^{\lambda y}-1\sim \lambda y,
\]
so that
\[
g(y)\sim \beta\lambda(1+\delta)(\lambda y)^{\beta-1}.
\]
If $0<\beta<1$, then $g(y)\to\infty$ as $y\downarrow 0$, and hence the mode is at $y=0$. if $\beta>1$, then $g(y)\to 0$ as $y\downarrow 0$, so the mode must be an interior point determined by the modal equation.

Finally, for $\beta=1$, the modal equation reduces to
\[
\frac{1}{1+x}
+\frac{1-\delta}{1+\delta+(1-\delta)x}
-\frac{3}{1+x}
=0,
\]
that is,
\[
-\frac{2}{1+x}
+\frac{1-\delta}{1+\delta+(1-\delta)x}
=0.
\]
Solving for $x$ yields
\[
x_m=-\frac{1+3\delta}{1-\delta}.
\]
Thus $x_m>0$ if and only if $\delta<-\tfrac{1}{3}$, and in that case
\[
y_m=\frac{1}{\lambda}\log(1+x_m)
=
\frac{1}{\lambda}\log\left(\frac{-4\delta}{1-\delta}\right).
\]
When $\delta\ge -\tfrac{1}{3}$, there is no positive interior solution, so the mode occurs at the boundary point $y=0$.
\end{proof}

\subsection{Stress-strength reliability}
The stress-strength reliability parameter measures the probability that a system's strength exceeds the applied stress, usually expressed as $R=p(Y>X)$ where $Y$ represents strength and $X$ represents stress. It is used to determine the probability that a component will perform satisfactorily under operational conditions \citep{kotz2003, hassan2023reliability}. 
The stress--strength reliability for two independent random variables $X$ and $Y$ is defined as:
\begin{equation*}
R=p(Y<X)=\int_{0}^{\infty}G_Y(x)\,g_X(x)\,dx.
\end{equation*}
\begin{Proposition}
Let 
$X\sim NTLE(\lambda_1,\beta_1,\delta_1),
\quad
Y\sim NTLE(\lambda_2,\beta_2,\delta_2),
$ be independent random variables, where $X$ denotes strength and $Y$ denotes stress. Then, the stress--strength reliability if
$
\lambda_1=\lambda_2=\lambda \; \text{and}  \;\beta_1=\beta_2=\beta,
$
is
\begin{equation}
R=\frac{1}{2}+\frac{\delta_2-\delta_1}{6}.
\end{equation}
\end{Proposition}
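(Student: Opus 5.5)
The plan is to reduce $R$ to a polynomial integral on $(0,1)$ using the same two substitutions as in the proof of the Shannon entropy theorem. With $\lambda_1=\lambda_2=\lambda$ and $\beta_1=\beta_2=\beta$, both $X$ and $Y$ have the same function $u(x)=(e^{\lambda x}-1)^\beta$ in their CDF and PDF; only $\delta$ differs. So I will write
\[
R=\int_0^\infty G_Y(x)\,g_X(x)\,dx ,
\]
take $u=(e^{\lambda x}-1)^\beta$, and then take $v=u/(1+u)\in(0,1)$.

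\textbf{Step 1: the density factor.} By the entropy proof, and specifically \eqref{eq:weight_v_NTLE}, the factor $g_X(x)\,dx$ becomes $(1+\delta_1-2\delta_1 v)\,dv$.

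\textbf{Step 2: the distribution-function factor.} Following the stochastic ordering proof, write $G_Y(x)=\phi(u,\delta_2)=u(1+\delta_2+u)/(1+u)^2$. Using $u/(1+u)=v$ and $1/(1+u)=1-v$ gives
\[
\frac{1+\delta_2+u}{1+u}=1+\delta_2(1-v),
\]
so that
\[
G_Y=v+\delta_2\,v(1-v).
\]
Hence
\[
R=\int_0^1 \bigl[v+\delta_2 v(1-v)\bigr]\bigl[1+\delta_1(1-2v)\bigr]\,dv .
\]

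\textbf{Step 3: evaluate the polynomial integral.} The integral splits into two elementary pieces:
\[
\int_0^1 v\bigl(1+\delta_1-2\delta_1 v\bigr)\,dv=\frac{1+\delta_1}{2}-\frac{2\delta_1}{3}=\frac12-\frac{\delta_1}{6},
\]
\[
\int_0^1 v(1-v)\bigl[1+\delta_1(1-2v)\bigr]\,dv=\frac16 .
\]
In the second piece, $v(1-v)$ is symmetric about $v=\tfrac12$ while $1-2v$ is antisymmetric, so the $\delta_1$ term vanishes. Combining the pieces gives
\[
R=\frac12-\frac{\delta_1}{6}+\frac{\delta_2}{6},
\]
which is the claimed formula.

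\textbf{Expected obstacle.} The only step needing care is the algebraic rewriting of $G_Y$ in terms of $v$ in Step 2; everything after that is routine. As a sanity check, when $\delta_1=\delta_2$ the formula gives $R=\tfrac12$, as it must for identically distributed continuous variables.
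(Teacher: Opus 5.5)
Your proof is correct: the identity $G_Y=v+\delta_2 v(1-v)$, the density factor $(1+\delta_1-2\delta_1 v)\,dv$, and the symmetry argument that makes $\int_0^1 v(1-v)(1-2v)\,dv$ vanish all check out.

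Your route differs from the paper's. The paper stays in the variable $u=(e^{\lambda x}-1)^\beta$ and expands $u(1+\delta_2+u)[1+\delta_1+(1-\delta_1)u]$ in powers of $u$ over $(1+u)^5$. It then evaluates three beta integrals, $B(2,3)=B(3,2)=\tfrac{1}{12}$ and $B(4,1)=\tfrac14$.

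Computationally the two routes are equivalent, because $B(a,b)=\int_0^1 v^{a-1}(1-v)^{b-1}\,dv$ is exactly what your second substitution produces. Your decomposition is, however, more informative. The variable $v=u/(1+u)$ is the baseline logistic--exponential CDF. Your formulas $G=v+\delta v(1-v)$ and $dG=(1+\delta-2\delta v)\,dv$ therefore show that NTLE is the quadratic rank transmutation of the LE baseline. This makes it transparent why $R$ does not depend on $\lambda$ or $\beta$. Substituting $v=F(x)$ also shows that the same value $\tfrac12+(\delta_2-\delta_1)/6$ holds for any two quadratic-rank-transmuted variables sharing a common continuous baseline $F$.

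The paper's route is more mechanical and needs no structural insight, but the beta-function bookkeeping hides this generality.
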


\begin{proof}
By independence,
\[
R=p(Y<X)=\int_{0}^{\infty}G_Y(x)\,g_X(x)\,dx.
\]
For the NTLE model, this becomes
\begin{align}
R
&=
\int_{0}^{\infty}
\frac{(e^{\lambda_2 x}-1)^{\beta_2}\left[1+\delta_2+(e^{\lambda_2 x}-1)^{\beta_2}\right]}
{\left[1+(e^{\lambda_2 x}-1)^{\beta_2}\right]^2}
\nonumber\\
&\qquad\qquad\times
\frac{\beta_1\lambda_1 e^{\lambda_1 x}(e^{\lambda_1 x}-1)^{\beta_1-1}
\left[(1+(e^{\lambda_1 x}-1)^{\beta_1})+\delta_1(1-(e^{\lambda_1 x}-1)^{\beta_1})\right]}
{\left(1+(e^{\lambda_1 x}-1)^{\beta_1}\right)^3}
\,dx.
\end{align}
This gives the general form.

Now assume $\lambda_1=\lambda_2=\lambda$ and $\beta_1=\beta_2=\beta$. Let
\[
u=(e^{\lambda x}-1)^\beta.
\]
Then
\[
du=\beta\lambda e^{\lambda x}(e^{\lambda x}-1)^{\beta-1}\,dx,
\]
and
\[
G_Y(x)=\frac{u(1+\delta_2+u)}{(1+u)^2},
\qquad
g_X(x)\,dx=
\frac{1+\delta_1+(1-\delta_1)u}{(1+u)^3}\,du.
\]
Hence,
\[
R
=
\int_{0}^{\infty}
\frac{u(1+\delta_2+u)\left[1+\delta_1+(1-\delta_1)u\right]}
{(1+u)^5}\,du.
\]
Simplifying the numerator
\[
u(1+\delta_2+u)\left[1+\delta_1+(1-\delta_1)u\right]
=
(1+\delta_2)(1+\delta_1)u
+\left[(1+\delta_2)(1-\delta_1)+(1+\delta_1)\right]u^2
+(1-\delta_1)u^3.
\]
Therefore,
\begin{align}
R
&=
(1+\delta_2)(1+\delta_1)\int_{0}^{\infty}\frac{u}{(1+u)^5}\,du
+\left[(1+\delta_2)(1-\delta_1)+(1+\delta_1)\right]
\int_{0}^{\infty}\frac{u^2}{(1+u)^5}\,du
\nonumber\\
&\qquad
+(1-\delta_1)\int_{0}^{\infty}\frac{u^3}{(1+u)^5}\,du.
\end{align}

Using
\[
\int_{0}^{\infty}\frac{u}{(1+u)^5}\,du=B(2,3)=\frac{1}{12},
\qquad
\int_{0}^{\infty}\frac{u^2}{(1+u)^5}\,du=B(3,2)=\frac{1}{12},
\]
and
\[
\int_{0}^{\infty}\frac{u^3}{(1+u)^5}\,du=B(4,1)=\frac{1}{4},
\]
we obtain
\begin{align*}
R
&=
\frac{(1+\delta_2)(1+\delta_1)}{12}
+\frac{(1+\delta_2)(1-\delta_1)+(1+\delta_1)}{12}
+\frac{1-\delta_1}{4}.
\end{align*}
Simplifying, we obtain
\[
R=\frac{1}{2}+\frac{\delta_2-\delta_1}{6}.
\]
This completes the proof.
\end{proof}

\begin{remark}
When $\delta_1=\delta_2$, the above formula reduces to
\[
R=\frac{1}{2},
\]
which is consistent with symmetry when the strength and stress variables belong to the same NTLE model.
\end{remark}

\subsection{Residual life functions}
Residual life functions are widely used in reliability and survival analysis as they describe the expected remaining lifetime of a component that has survived up to a given time \citep{lai2022,navarro2021aging}. 
The mean residual life (MRL) function is
\begin{equation}
m(t)=e(Y-t\,|\,Y>t)
=\frac{1}{\overline{G}(t)}
\int_{t}^{\infty}\overline{G}(y)\,dy .
\end{equation}
while  the reversed residual life (RRL) function is defined as
\begin{equation}
r(t)=e(t-Y\,|\,Y\le t)
=\frac{1}{G(t)}\int_{0}^{t}G(y)\,dy .
\end{equation}

\begin{Proposition}[Mean residual life function]
Let $Y\sim NTLE(\lambda,\beta,\delta)$ the mean residual life is
\begin{equation}
m(t)=
\frac{1}{\overline{G}(t)}
\int_{t}^{\infty}
\frac{1+(1-\delta)u(y)}{(1+u(y))^2}\,dy .
\end{equation}
\end{Proposition}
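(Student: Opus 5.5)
The claim is that the mean residual life reduces to integrating the NTLE survival function over $(t,\infty)$ and dividing by $\overline{G}(t)$. I will therefore start from the definition $m(t)=\frac{1}{\overline{G}(t)}\int_t^\infty \overline{G}(y)\,dy$, which holds for any nonnegative lifetime with finite mean. The only content is to identify $\overline{G}(y)$ explicitly in terms of $u(y)=(e^{\lambda y}-1)^\beta$, the same substitution used for the Shannon entropy and stress--strength results. I will also check that the integral is finite.

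\textbf{Step 1: the survival function.} Writing $u=u(y)$, the CDF in \eqref{equation CDF} is $G(y)=u(1+\delta+u)/(1+u)^2$. Hence
\[
\overline{G}(y)=\frac{(1+u)^2-u(1+\delta+u)}{(1+u)^2}=\frac{1+2u+u^2-u-\delta u-u^2}{(1+u)^2}=\frac{1+(1-\delta)u}{(1+u)^2}.
\]
This agrees with the survival function quoted in the introduction. Substituting it into the defining integral gives exactly the displayed formula.

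\textbf{Step 2: finiteness.} I need $\int_t^\infty \overline{G}(y)\,dy<\infty$ and $\overline{G}(t)>0$.
\begin{itemize}
\item Positivity: since $-1<\delta<1$, the numerator $1+(1-\delta)u$ is positive, so $\overline{G}(t)>0$ for all $t$ and the division is legitimate.
\item Integrability: as $y\to\infty$, $u(y)\sim e^{\beta\lambda y}$, so $\overline{G}(y)\sim(1-\delta)e^{-\beta\lambda y}$. This decays exponentially, so the tail integral converges. Equivalently, $E(Y)<\infty$.
\end{itemize}

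\textbf{Step 3: the identity behind the definition.} For completeness I will note that $E(Y-t\mid Y>t)=\frac{1}{\overline{G}(t)}\int_t^\infty (y-t)g(y)\,dy$. Integrating by parts turns the numerator into $\int_t^\infty \overline{G}(y)\,dy$. The boundary term $(y-t)\overline{G}(y)$ vanishes at infinity by the exponential decay from Step 2.

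\textbf{Main obstacle.} The only step with any substance is the algebraic simplification of $1-G$ in Step 1, and it is immediate. I will not attempt a closed form for the remaining integral, since the statement leaves it as an integral to be evaluated numerically.
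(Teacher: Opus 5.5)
Your proof is correct and follows the paper's argument in substance: both insert the survival function $\overline{G}(y)=\frac{1+(1-\delta)u(y)}{(1+u(y))^2}$ into the definition $m(t)=\frac{1}{\overline{G}(t)}\int_t^\infty\overline{G}(y)\,dy$. The paper additionally changes variables via $dy=du/[\beta\lambda(1+u^{1/\beta})u^{(\beta-1)/\beta}]$ to rewrite the tail integral over $(u_t,\infty)$ with $u_t=(e^{\lambda t}-1)^\beta$, while you instead supply what the paper leaves implicit: the derivation of $\overline{G}$, the positivity and exponential-tail integrability checks, and the integration-by-parts justification of the definition.
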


\begin{proof}
Let 
\[
u=(e^{\lambda y}-1)^\beta .
\]
Then
\[
du=\beta\lambda e^{\lambda y}(e^{\lambda y}-1)^{\beta-1}dy .
\]

Since
\[
e^{\lambda y}=1+u^{1/\beta},
\]
we obtain
\[
dy=
\frac{du}{\beta\lambda (1+u^{1/\beta})u^{(\beta-1)/\beta}} .
\]

Hence
\begin{align}
\int_{t}^{\infty}\overline{G}(y)dy
&=
\int_{u_t}^{\infty}
\frac{1+(1-\delta)u}{(1+u)^2}
\frac{du}{\beta\lambda (1+u^{1/\beta})u^{(\beta-1)/\beta}},
\end{align}
where the survival function is
\begin{equation}
\overline{G}(y)=1-G(y)
=\frac{1+(1-\delta)u(y)}{(1+u(y))^2}.
\end{equation}

Therefore
\begin{equation}
m(t)=
\frac{1}{\overline{G}(t)}
\frac{1}{\beta\lambda}
\int_{u_t}^{\infty}
\frac{1+(1-\delta)u}
{(1+u)^2(1+u^{1/\beta})u^{(\beta-1)/\beta}}du .
\end{equation}
\end{proof}

\begin{Proposition}[Reversed residual life function]
The reversed residual life (RRL) function NTLE distribution
is given as \begin{equation}
r(t)=
\frac{1}{G(t)}
\int_{0}^{t}
\frac{u(y)[1+\delta+u(y)]}{(1+u(y))^2}dy .
\end{equation}
\end{Proposition}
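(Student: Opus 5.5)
The statement is essentially the definition of the reversed residual life function combined with the explicit NTLE distribution function, so I will prove it by direct substitution, mirroring the argument used for the mean residual life proposition. The first step is to recall the definition
\[
r(t)=\frac{1}{G(t)}\int_0^t G(y)\,dy,
\]
which requires $G(t)>0$. This holds for every $t>0$: with $u(y)=(e^{\lambda y}-1)^\beta>0$ and $-1<\delta<1$, both factors $u(y)$ and $1+\delta+u(y)$ in the numerator of \eqref{equation CDF} are positive. Inserting \eqref{equation CDF} with $u(y)=(e^{\lambda y}-1)^\beta$ gives the stated integrand $u(y)[1+\delta+u(y)]/(1+u(y))^2$ immediately.

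For completeness I will check that the integral is finite. The integrand is continuous on $(0,t]$. As $y\downarrow0$ we have $u(y)\sim(\lambda y)^\beta\to0$, so $G(y)\to0$, and in any case $0\le G\le 1$. The integral over the bounded interval $[0,t]$ therefore exists for all parameter values.

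To parallel the mean residual life result, I will then perform the substitution $u=(e^{\lambda y}-1)^\beta$, with $dy=du/\bigl[\beta\lambda(1+u^{1/\beta})u^{(\beta-1)/\beta}\bigr]$, exactly as in the preceding proof. This yields the equivalent form
\[
r(t)=\frac{1}{G(t)}\,\frac{1}{\beta\lambda}\int_0^{u_t}\frac{1+\delta+u}{(1+u)^2(1+u^{1/\beta})\,u^{-1/\beta}}\,du,
\qquad u_t=(e^{\lambda t}-1)^\beta.
\]
Here $u\cdot u^{-(\beta-1)/\beta}=u^{1/\beta}$ has been simplified.

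The only point that needs care is integrability near $u=0$ in this transformed form. The integrand behaves like $(1+\delta)u^{1/\beta}$ there, so it is bounded and no obstacle arises. No closed form is claimed, and the remaining integral can be left for numerical evaluation.
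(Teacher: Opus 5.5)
Your proposal is correct and takes essentially the same approach as the paper. Both substitute the NTLE distribution function into $r(t)=G(t)^{-1}\int_0^t G(y)\,dy$ and apply the change of variable $u=(e^{\lambda y}-1)^\beta$ with the same Jacobian, arriving at the same transformed integral; your simplification $u\cdot u^{-(\beta-1)/\beta}=u^{1/\beta}$ is valid, and your checks that $G(t)>0$ and that the integrand stays bounded near the origin make your argument slightly more complete than the paper's.
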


\begin{proof}
Using the transformation
\[
u=(e^{\lambda y}-1)^\beta ,
\]
we have
\[
dy=\frac{du}{\beta\lambda (1+u^{1/\beta})u^{(\beta-1)/\beta}} .
\]

Therefore
\begin{align}
\int_{0}^{t}G(y)dy
&=
\frac{1}{\beta\lambda}
\int_{0}^{u_t}
\frac{u(1+\delta+u)}
{(1+u)^2(1+u^{1/\beta})u^{(\beta-1)/\beta}}du ,
\end{align}
\end{proof}

\subsection{Incomplete moments}
Incomplete moments describe the expected value of a random variable over a restricted portion of its support. In reliability applications, incomplete moments help quantify the contribution of observations below a certain threshold, which is important when studying early failures or truncated data. They are also widely used in economic inequality measures and insurance risk modelling \citep{elgarhy2022}.
The $r$th incomplete moment is given by:
\begin{equation}
\mu_r(t)=e(Y^r\mathbf{1}_{\{Y\le t\}})
=\int_{0}^{t} y^r g(y)\,dy .
\end{equation}
\begin{Proposition}
Let $Y\sim NTLE(\lambda,\beta,\delta)$. The $r$th incomplete moment is given as

\begin{align}
\mu_r(t)
&=
\int_{0}^{u_t}
\left(\frac{1}{\lambda}\log(1+u^{1/\beta})\right)^r
\frac{1+\delta+(1-\delta)u}{(1+u)^3}du ,
\end{align}
\end{Proposition}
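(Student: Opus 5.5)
The plan is to reuse the substitution that served in the proofs of the Shannon and R\'enyi entropies and of the residual life functions, namely $u=(e^{\lambda y}-1)^\beta$. On $(0,\infty)$ this map is strictly increasing and smooth, because $\lambda,\beta>0$. It sends $y=0$ to $u=0$ and $y=t$ to $u_t=(e^{\lambda t}-1)^\beta$. Inverting gives $e^{\lambda y}=1+u^{1/\beta}$, so
\[
y=\frac{1}{\lambda}\log\bigl(1+u^{1/\beta}\bigr),
\]
and this gives the factor $y^r$ in terms of $u$.

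Next I will bring in the measure identity already established in the proof of the Shannon entropy theorem:
\[
g(y)\,dy=\frac{1+\delta+(1-\delta)u}{(1+u)^3}\,du .
\]
This holds because $du=\beta\lambda e^{\lambda y}(e^{\lambda y}-1)^{\beta-1}dy$ absorbs exactly the prefactor $\beta\lambda e^{\lambda y}(e^{\lambda y}-1)^{\beta-1}$ of $g$. What remains is the bracket $1+\delta+(1-\delta)(e^{\lambda y}-1)^\beta$ divided by $[1+(e^{\lambda y}-1)^\beta]^3$, which in terms of $u$ is the weight shown above.

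Substituting both pieces into $\mu_r(t)=\int_0^t y^r g(y)\,dy$ and changing the limits to $0$ and $u_t$ gives the stated formula directly.

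The only point that needs care is justifying the change of variables as an improper integral near $0$ when $0<\beta<1$, since $g$ is then unbounded at the origin (see the mode theorem). The integrand in $u$ is bounded near $u=0$ and continuous on $[0,u_t]$, and $y^r$ is bounded on $[0,t]$. So the integral is finite, and the substitution is valid by monotone convergence on $[\varepsilon,t]$ as $\varepsilon\downarrow0$.

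I will also remark that letting $t\to\infty$ recovers the ordinary $r$th moment. That limit is finite because $\log(1+u^{1/\beta})^r$ grows only logarithmically while the weight decays like $u^{-2}$. No closed form is claimed, so no series expansion is needed.
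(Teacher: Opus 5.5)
Your proposal is correct and matches the paper's proof: the paper also substitutes $u=(e^{\lambda y}-1)^\beta$, writes $y=\frac{1}{\lambda}\log(1+u^{1/\beta})$, lets the Jacobian cancel the factor $\beta\lambda e^{\lambda y}(e^{\lambda y}-1)^{\beta-1}$ of the density, and changes the limits to $0$ and $u_t$. Your extra care with the improper integral near $0$ when $0<\beta<1$, and your remark on the limit $t\to\infty$, go beyond what the paper writes but follow the same route.
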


\begin{proof}
By definition;
\begin{equation}
\mu_r(t)=e(Y^r\mathbf{1}_{\{Y\le t\}})
=\int_{0}^{t} y^r g(y)\,dy .
\end{equation}
Let $u=(e^{\lambda y}-1)^\beta$

Then, 
$y=\frac{1}{\lambda}\log\!\left(1+u^{1/\beta}\right)$ \qquad and $du=\beta\lambda e^{\lambda y}(e^{\lambda y}-1)^{\beta-1}\,dy$

\noindent Using the PDF in equation \ref{equation PDF} and noting that the term $ \beta\lambda e^{\lambda y}(e^{\lambda y}-1)^{\beta-1}dy$ also appears in the density. 
Hence, the Jacobian cancels the corresponding component of the density, and we obtain 
\begin{align*}
\mu_r(t)
&=
\int
\left(\frac{1}{\lambda}\log(1+u^{1/\beta})\right)^r
\frac{(1+u)+\delta(1-u)}{(1+u)^3}
\,du .
\end{align*}

\noindent Note that
\[
(1+u)+\delta(1-u)=1+\delta+(1-\delta)u,
\]
The integral becomes
\begin{equation}
\mu_r(t)
=
\int_{0}^{u_t}
\left(\frac{1}{\lambda}\log(1+u^{1/\beta})\right)^r
\frac{1+\delta+(1-\delta)u}{(1+u)^3}\,du ,
\end{equation}
\end{proof}

\subsection{Bonferroni and Lorenz curves} 
Bonferroni and Lorenz curves are used tools for measuring inequality and concentration within a distribution. The Lorenz curve describes the cumulative proportion of the total variable accounted for by the bottom fraction of the population, while the Bonferroni curve provides a related measure of concentration \citep{arshad}.

\begin{Proposition}
Let $F^{-1}(p)$ denote the quantile function of the NTLE distribution.  
The Bonferroni curve is
\begin{equation}
B(p)=\frac{1}{p\mu}\int_{0}^{F^{-1}(p)}y\,g(y)\,dy ,
\end{equation}
while the Lorenz curve is
\begin{equation}
L(p)=\frac{1}{\mu}\int_{0}^{F^{-1}(p)}y\,g(y)\,dy ,
\end{equation}
where $\mu=E(Y)$.

Using the incomplete first moment $\mu_1(t)$,
\begin{equation}
L(p)=\frac{\mu_1(F^{-1}(p))}{\mu},
\qquad
B(p)=\frac{\mu_1(F^{-1}(p))}{p\mu}.
\end{equation}
\end{Proposition}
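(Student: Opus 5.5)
The plan is to read the definitions of $B(p)$ and $L(p)$ directly against the incomplete-moment proposition proved just before. The first step is to identify the common integral $\int_{0}^{F^{-1}(p)}y\,g(y)\,dy$ in both curves as $\mu_1(t)$ with $r=1$ and $t=F^{-1}(p)$. This works because $\mu_r(t)=\int_0^t y^r g(y)\,dy$ by definition, and $F^{-1}(p)\in(0,\infty)$ for $p\in(0,1)$. Dividing by $\mu$ gives $L(p)=\mu_1(F^{-1}(p))/\mu$, and dividing by a further factor of $p$ gives $B(p)=\mu_1(F^{-1}(p))/(p\mu)$.

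To make the identity usable, I will write $\mu_1(F^{-1}(p))$ explicitly with the $u$-substitution $u=(e^{\lambda y}-1)^\beta$ from the incomplete-moment proposition. The upper limit becomes $u_p=(e^{\lambda F^{-1}(p)}-1)^\beta$. Since $G(y)=\phi(u,\delta)$ with $\phi(u,\delta)=u(1+\delta+u)/(1+u)^2$, as in the stochastic-ordering proof, $u_p$ solves $\phi(u_p,\delta)=p$. Clearing denominators gives the quadratic
\[
(1-p)u^2+(1+\delta-2p)u-p=0,
\]
and I take its positive root
\[
u_p=\frac{-(1+\delta-2p)+\sqrt{(1+\delta)^2+4p\delta... }}{2(1-p)}.
\]
The discriminant should simplify to $(1+\delta)^2-4p\delta$, and I will check this carefully. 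The positive root exists and is unique because $\partial\phi/\partial u>0$ was shown earlier.

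Next, $\mu=E(Y)=\mu_1(\infty)$, which follows from the same proposition with $u_t\to\infty$. I then apply the substitution $v=u/(1+u)$ from the Shannon-entropy proof, which turns the weight into $(1+\delta-2\delta v)\,dv$. This yields
\[
L(p)=\frac{\int_0^{u_p/(1+u_p)}(1+\delta-2\delta v)\log\!\left[1+\left(\frac{v}{1-v}\right)^{1/\beta}\right]dv}{\int_0^{1}(1+\delta-2\delta v)\log\!\left[1+\left(\frac{v}{1-v}\right)^{1/\beta}\right]dv},
\]
where the factor $1/\lambda$ cancels. $B(p)$ is the same expression divided by $p$.

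The main obstacle is rigor rather than algebra, namely showing that $0<\mu<\infty$. The denominator is $j_{\beta,\delta}$ from the Shannon-entropy theorem. It is positive because the weight $1+\delta-2\delta v$ is nonnegative for $|\delta|<1$ and the logarithm is positive. It is finite because near $v=1$ the logarithm behaves like $\beta^{-1}|\log(1-v)|$, which is integrable. The remaining checks, that the upper limit lies in $(0,1)$ and the quadratic root is correct, are routine.
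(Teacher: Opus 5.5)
Your argument is correct and matches the paper's: both curves reduce to $\mu_1(F^{-1}(p))$ directly from the definition $\mu_1(t)=\int_0^t y\,g(y)\,dy$. The paper then simply records $\mu_1(F^{-1}(p))=\frac{1}{\lambda}\int_0^{u_p}\log(1+u^{1/\beta})\frac{1+\delta+(1-\delta)u}{(1+u)^3}\,du$ with $u_p=(e^{\lambda F^{-1}(p)}-1)^\beta$, which is exactly your substitution step. Your extras go beyond the paper and are sound: the explicit root $u_p$, the $v$-form whose denominator is $j_{\beta,\delta}$ (so $\mu=j_{\beta,\delta}/\lambda$), and the check $0<\mu<\infty$. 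The one fix is the radicand in your displayed root, which should read $(1+\delta)^2-4p\delta$, as you anticipated.
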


For the NTLE distribution,
\[
\mu_1(F^{-1}(p))
=
\frac{1}{\lambda}
\int_0^{u_p}
\log(1+u^{1/\beta})
\frac{1+\delta+(1-\delta)u}{(1+u)^3}\,du,
\]
where
\[
u_p=(e^{\lambda F^{-1}(p)}-1)^\beta.
\]

\begin{remark}
The Bonferroni and Lorenz curves provide useful measures of concentration and inequality for the NTLE distribution and can be evaluated once the incomplete first moment is obtained.
\end{remark}

\section{Methods of parameter estimation for NTLE}

Let $Y_1, Y_2, \ldots, Y_n$ be a random sample drawn from a continuous distribution with probability density function $g(y;\theta)$ and cumulative distribution function $G(xy;\theta)$, where $\theta$ denotes an unknown parameter or a vector of parameters to be estimated. Various statistical techniques have been proposed for estimating such parameters. In this study, ten estimation methods are considered due to their widespread application in statistical inference and distributional modeling. The parameters $(\lambda,\beta,\delta)$ are estimated using several estimation techniques described in sections 3.1 - 3.10.

\subsection{Maximum Likelihood estimation}
Maximum likelihood estimation determines parameter values that maximize the likelihood of observing the data \citep{Fisher1922, casellaBerger2002}

The likelihood function for the NTLE distribution based on the observed sample is

\begin{equation}
L(\lambda,\beta,\delta)=\prod_{i=1}^{n} g(y_i;\lambda,\beta,\delta).
\end{equation}

Substituting the density function yields

\begin{equation}
L(\lambda,\beta,\delta)=
\prod_{i=1}^{n}
\frac{\beta\lambda e^{\lambda y_i}(e^{\lambda y_i}-1)^{\beta-1}
[(1+(e^{\lambda y_i}-1)^\beta)+\delta(1-(e^{\lambda y_i}-1)^\beta)]}
{(1+(e^{\lambda y_i}-1)^\beta)^3}.
\end{equation}

Taking logarithms, the log-likelihood function becomes
\begin{align*}
\ell(\lambda,\beta,\delta)
&=n\ln\beta+n\ln\lambda+\lambda\sum_{i=1}^{n}y_i
+(\beta-1)\sum_{i=1}^{n}\ln(e^{\lambda y_i}-1) \\
&\quad +\sum_{i=1}^{n}\ln\Big[(1+(e^{\lambda y_i}-1)^\beta)
+\delta(1-(e^{\lambda y_i}-1)^\beta)\Big] \\
&\quad -3\sum_{i=1}^{n}\ln\Big[1+(e^{\lambda y_i}-1)^\beta\Big].
\end{align*}

The score vector is obtained by differentiating the log-likelihood with respect to the parameters \citep{adesegun2023tle}:

\[
U(\boldsymbol{\theta})=
\left(
\frac{\partial\ell}{\partial\lambda},
\frac{\partial\ell}{\partial\beta},
\frac{\partial\ell}{\partial\delta}
\right)^T.
\]

The likelihood equations are obtained by solving

\begin{equation}
\frac{\partial\ell}{\partial\lambda}=0, \quad
\frac{\partial\ell}{\partial\beta}=0, \quad
\frac{\partial\ell}{\partial\delta}=0.
\end{equation}

These equations do not admit closed-form solutions, numerical procedures such as the Newton--Raphson algorithm or quasi-Newton optimization methods are employed \citep{adesegun2023tle}.

\subsubsection{Observed Fisher Information Matrix}

Let

\[
i(\boldsymbol{\theta})=
-\frac{\partial^2\ell}{\partial\boldsymbol{\theta}\partial\boldsymbol{\theta}^T}
\]

denote the observed Fisher information matrix. explicitly,

\[
i(\boldsymbol{\theta})=
\begin{pmatrix}
-\frac{\partial^2\ell}{\partial\lambda^2} &
-\frac{\partial^2\ell}{\partial\lambda\partial\beta} &
-\frac{\partial^2\ell}{\partial\lambda\partial\delta} \\

-\frac{\partial^2\ell}{\partial\beta\partial\lambda} &
-\frac{\partial^2\ell}{\partial\beta^2} &
-\frac{\partial^2\ell}{\partial\beta\partial\delta} \\

-\frac{\partial^2\ell}{\partial\delta\partial\lambda} &
-\frac{\partial^2\ell}{\partial\delta\partial\beta} &
-\frac{\partial^2\ell}{\partial\delta^2}
\end{pmatrix}.
\]

Under asymptotic normal conditions, the maximum likelihood estimator satisfies

\[
\sqrt{n}(\hat{\boldsymbol{\theta}}-\boldsymbol{\theta})
\rightarrow N(0,i^{-1}(\boldsymbol{\theta})).
\]

Consequently, approximate confidence intervals for the parameters may be constructed using

\[
\hat{\theta}_i \pm z_{\alpha/2}\sqrt{\widehat{Var}(\hat{\theta}_i)}.
\]

\subsection{Method of Moments estimation}
The method of moments (MME) provides an alternative approach for estimating
unknown parameters by equating the sample moments with their corresponding
population moments \citep{pearson1894}. 

The $k$th sample moment about the origin is defined as
\begin{equation}
m_k=\frac{1}{n}\sum_{i=1}^{n}Y_i^k, \qquad k=1,2,3.
\end{equation}

Let
\[
\mu_k=e(Y^k)
\]
denote the $k$th theoretical moment of the NTLE distribution. Then, 
\begin{equation}
\mu_k = e(Y^k)=\int_{0}^{\infty} y^k g(y)\,dy ,
\end{equation}
Using the transformation
\[
u=(e^{\lambda y}-1)^\beta .
\]
Then
\[
y=\frac{1}{\lambda}\log\left(1+u^{1/\beta}\right),
\]
and
\[
du=\beta\lambda e^{\lambda y}(e^{\lambda y}-1)^{\beta-1}dy .
\]
Hence,
\begin{equation}
\mu_k=
\int_{0}^{\infty}
\left(\frac{1}{\lambda}\log(1+u^{1/\beta})\right)^k
\frac{(1+u)+\delta(1-u)}{(1+u)^3}
\,du .
\end{equation}

Therefore, the first three population moments of the NTLE distribution are

\begin{align*}
\mu_1 &=
\int_{0}^{\infty}
\frac{1}{\lambda}\log(1+u^{1/\beta})
\frac{(1+u)+\delta(1-u)}{(1+u)^3}\,du, \\
\\
\mu_2 &=
\int_{0}^{\infty}
\left(\frac{1}{\lambda}\log(1+u^{1/\beta})\right)^2
\frac{(1+u)+\delta(1-u)}{(1+u)^3}\,du, \\
\\
\mu_3 &=
\int_{0}^{\infty}
\left(\frac{1}{\lambda}\log(1+u^{1/\beta})\right)^3
\frac{(1+u)+\delta(1-u)}{(1+u)^3}\,du .
\end{align*}

The method of moments estimators of the parameters
$(\lambda,\beta,\delta)$ are obtained by solving the equations;

\begin{align}
\frac{1}{n}\sum_{i=1}^{n}Y_i
&=
\int_{0}^{\infty}
\frac{1}{\lambda}\log(1+u^{1/\beta})
\frac{(1+u)+\delta(1-u)}{(1+u)^3}\,du
\label{eqt 1},
\\
\frac{1}{n}\sum_{i=1}^{n}Y_i^2
&=
\int_{0}^{\infty}
\left(\frac{1}{\lambda}\log(1+u^{1/\beta})\right)^2
\frac{(1+u)+\delta(1-u)}{(1+u)^3}\,du,
\label{eqt 11}
\\
\frac{1}{n}\sum_{i=1}^{n}Y_i^3
&=
\int_{0}^{\infty}
\left(\frac{1}{\lambda}\log(1+u^{1/\beta})\right)^3
\frac{(1+u)+\delta(1-u)}{(1+u)^3}\,du .
\label{eqt 111}
\end{align}

Since these equations \ref{eqt 1}, \ref{eqt 11} and  \ref{eqt 111} do not give a close form. Hence, the parameter estimates $(\hat{\lambda},\hat{\beta},\hat{\delta})$
are obtained numerically.

\subsection{Least Squares Estimation (LSE)}

Least squares estimation for distribution parameters is based on minimizing the squared difference between the theoretical distribution function and the empirical distribution function \citep{Swain1988} i.e. 

\begin{equation}
\sum_{i=1}^{n}
\left[
G(y_{(i)};\lambda,\beta,\delta) - \frac{i}{n+1}
\right]^2,
\end{equation}

where $G(y_{(i)};\lambda,\beta,\delta)$ is the NTLE cumulative distribution function.

The least squares objective function is therefore given by
\begin{equation*}
S(\lambda,\beta,\delta)
=
\sum_{i=1}^{n}
\left[
\frac{(e^{\lambda y_{(i)}}-1)^\beta\left\{1+\delta+(e^{\lambda y_{(i)}}-1)^\beta\right\}}
{\left\{1+(e^{\lambda y_{(i)}}-1)^\beta\right\}^2}
-\frac{i}{n+1}
\right]^2.
\end{equation*}

Let
\begin{equation*}
u_i=(e^{\lambda y_{(i)}}-1)^\beta, \qquad i=1,2,\ldots,n.
\end{equation*}
Then
\begin{equation*}
G(y_{(i)};\lambda,\beta,\delta)=\frac{u_i(1+\delta+u_i)}{(1+u_i)^2},
\end{equation*}
and the objective function becomes
\begin{equation*}
S(\lambda,\beta,\delta)
=
\sum_{i=1}^{n}
\left[
\frac{u_i(1+\delta+u_i)}{(1+u_i)^2}
-\frac{i}{n+1}
\right]^2.
\end{equation*}

Writing
\begin{equation*}
r_i=
\frac{u_i(1+\delta+u_i)}{(1+u_i)^2}
-\frac{i}{n+1},
\end{equation*}
we have
\begin{equation*}
S(\lambda,\beta,\delta)=\sum_{i=1}^{n} r_i^2.
\end{equation*}
Hence,
\begin{equation*}
\frac{\partial S}{\partial \theta}
=
2\sum_{i=1}^{n} r_i \frac{\partial r_i}{\partial \theta},
\qquad
\theta\in\{\lambda,\beta,\delta\}.
\end{equation*}

Now,
\begin{equation*}
\frac{\partial G(y_{(i)};\lambda,\beta,\delta)}{\partial u_i}
=
\frac{1+\delta+(1-\delta)u_i}{(1+u_i)^3},
\end{equation*}
so that
\begin{equation*}
\frac{\partial r_i}{\partial \lambda}
=
\frac{1+\delta+(1-\delta)u_i}{(1+u_i)^3}
\frac{\partial u_i}{\partial \lambda},
\end{equation*}
\begin{equation*}
\frac{\partial r_i}{\partial \beta}
=
\frac{1+\delta+(1-\delta)u_i}{(1+u_i)^3}
\frac{\partial u_i}{\partial \beta},
\end{equation*}
and
\begin{equation*}
\frac{\partial r_i}{\partial \delta}
=
\frac{u_i}{(1+u_i)^2}.
\end{equation*}

Since
\begin{equation*}
u_i=(e^{\lambda y_{(i)}}-1)^\beta,
\end{equation*}
It follows that
\begin{equation*}
\frac{\partial u_i}{\partial \lambda}
=
\beta y_{(i)} e^{\lambda y_{(i)}}(e^{\lambda y_{(i)}}-1)^{\beta-1},
\end{equation*}
and
\begin{equation*}
\frac{\partial u_i}{\partial \beta}
=
(e^{\lambda y_{(i)}}-1)^\beta \log(e^{\lambda y_{(i)}}-1)
=
u_i\log(e^{\lambda y_{(i)}}-1).
\end{equation*}

Therefore, the normal equations for the least squares estimators are
\begin{equation}
\sum_{i=1}^{n}
r_i
\frac{1+\delta+(1-\delta)u_i}{(1+u_i)^3}
\beta y_{(i)} e^{\lambda y_{(i)}}(e^{\lambda y_{(i)}}-1)^{\beta-1}
=0,
\end{equation}
\begin{equation}
\sum_{i=1}^{n}
r_i
\frac{1+\delta+(1-\delta)u_i}{(1+u_i)^3}
u_i\log(e^{\lambda y_{(i)}}-1)
=0,
\end{equation}
and
\begin{equation}
\sum_{i=1}^{n}
r_i
\frac{u_i}{(1+u_i)^2}
=0.
\end{equation}

These equations are nonlinear in $\lambda$, $\beta$ and $\delta$,
closed-form solutions are not available, estimates are obtained numerically using iterative optimisation methods.

\subsection{Weighted Least Squares Estimation}

The weighted least squares (WLSE) method improves the ordinary least squares estimator by assigning different weights to the squared deviations between the theoretical and empirical distribution functions \citep{Swain1988}.. This approach accounts for the fact that the variance of the empirical distribution function depends on the order of the observations.

The WLSE minimizes

\begin{equation}
\sum_{i=1}^{n}
w_i
\left[
G(y_{(i)};\lambda,\beta,\delta) - \frac{i}{n+1}
\right]^2,
\end{equation}

where $w_i$ are weights chosen based on the variance of the empirical distribution function.

A commonly used weight function is

\begin{equation*}
w_i=\frac{(n+1)^2(n+2)}{i(n-i+1)}, \qquad i=1,2,\ldots,n.
\end{equation*}

The weighted least squares objective function is therefore defined as

\begin{equation*}
W(\lambda,\beta,\delta)
=
\sum_{i=1}^{n}
w_i
\left[
\frac{(e^{\lambda y_{(i)}}-1)^\beta\left\{1+\delta+(e^{\lambda y_{(i)}}-1)^\beta\right\}}
{\left\{1+(e^{\lambda y_{(i)}}-1)^\beta\right\}^2}
-\frac{i}{n+1}
\right]^2 .
\end{equation*}

For convenience, let

\begin{equation*}
u_i=(e^{\lambda y_{(i)}}-1)^\beta, \qquad i=1,2,\ldots,n.
\end{equation*}

Then

\begin{equation*}
G(y_{(i)};\lambda,\beta,\delta)=\frac{u_i(1+\delta+u_i)}{(1+u_i)^2},
\end{equation*}

and the weighted least squares objective function becomes

\begin{equation*}
W(\lambda,\beta,\delta)
=
\sum_{i=1}^{n}
w_i
\left[
\frac{u_i(1+\delta+u_i)}{(1+u_i)^2}
-\frac{i}{n+1}
\right]^2 .
\end{equation*}

Let

\begin{equation*}
r_i=
\frac{u_i(1+\delta+u_i)}{(1+u_i)^2}
-\frac{i}{n+1}.
\end{equation*}

Then

\begin{equation*}
W(\lambda,\beta,\delta)=\sum_{i=1}^{n} w_i r_i^2.
\end{equation*}

The weighted least squares estimators
$\hat{\lambda}_{WLSE}$, $\hat{\beta}_{WLSE}$ and $\hat{\delta}_{WLSE}$
are obtained by minimizing $W(\lambda,\beta,\delta)$ with respect to
$\lambda$, $\beta$ and $\delta$. Thus,

\begin{equation*}
\frac{\partial W}{\partial \theta}
=
2\sum_{i=1}^{n} w_i r_i \frac{\partial r_i}{\partial \theta},
\qquad
\theta\in\{\lambda,\beta,\delta\}.
\end{equation*}

Using

\begin{equation*}
\frac{\partial G(y_{(i)};\lambda,\beta,\delta)}{\partial u_i}
=
\frac{1+\delta+(1-\delta)u_i}{(1+u_i)^3},
\end{equation*}

we obtain

\begin{equation*}
\frac{\partial r_i}{\partial \lambda}
=
\frac{1+\delta+(1-\delta)u_i}{(1+u_i)^3}
\frac{\partial u_i}{\partial \lambda},
\end{equation*}

\begin{equation*}
\frac{\partial r_i}{\partial \beta}
=
\frac{1+\delta+(1-\delta)u_i}{(1+u_i)^3}
\frac{\partial u_i}{\partial \beta},
\end{equation*}

and

\begin{equation*}
\frac{\partial r_i}{\partial \delta}
=
\frac{u_i}{(1+u_i)^2}.
\end{equation*}

Since

\begin{equation*}
u_i=(e^{\lambda y_{(i)}}-1)^\beta,
\end{equation*}

It follows that

\begin{equation*}
\frac{\partial u_i}{\partial \lambda}
=
\beta y_{(i)} e^{\lambda y_{(i)}}(e^{\lambda y_{(i)}}-1)^{\beta-1},
\end{equation*}

and

\begin{equation*}
\frac{\partial u_i}{\partial \beta}
=
u_i\log(e^{\lambda y_{(i)}}-1).
\end{equation*}

Therefore, the estimating equations for the weighted least squares estimators are

\begin{equation*}
\sum_{i=1}^{n}
w_i r_i
\frac{1+\delta+(1-\delta)u_i}{(1+u_i)^3}
\beta y_{(i)} e^{\lambda y_{(i)}}(e^{\lambda y_{(i)}}-1)^{\beta-1}
=0,
\end{equation*}

\begin{equation*}
\sum_{i=1}^{n}
w_i r_i
\frac{1+\delta+(1-\delta)u_i}{(1+u_i)^3}
u_i\log(e^{\lambda y_{(i)}}-1)
=0,
\end{equation*}

and

\begin{equation*}
\sum_{i=1}^{n}
w_i r_i
\frac{u_i}{(1+u_i)^2}
=0.
\end{equation*}

The weighted least squares estimates of the NTLE parameters are obtained numerically using iterative optimization algorithms.

\subsection{Maximum Product of Spacings Estimation (MPS)}

In this method, parameters are estimated by maximizing the geometric mean of spacings between successive values of the distribution function. \citep{cheng1979maximum, kurdi2023bayesian}. This spacing is defined as:

\begin{equation}
D_i=G(y_{(i)})-G(y_{(i-1)}), \quad i=1,2,\ldots,n+1,
\end{equation}

where

\[
G(y_{(0)})=0, \qquad G(y_{(n+1)})=1.
\]

The product of spacings is

\begin{equation*}
p(\lambda,\beta,\delta)=\prod_{i=1}^{n+1} D_i.
\end{equation*}

Instead of maximizing $p$, it is convenient to maximize its logarithm

\begin{equation*}
M(\lambda,\beta,\delta)=\sum_{i=1}^{n+1}\log D_i.
\end{equation*}

Substituting the NTLE distribution function yields

\begin{equation*}
D_i =
\frac{(e^{\lambda y_{(i)}}-1)^\beta
\left[1+\delta+(e^{\lambda y_{(i)}}-1)^\beta\right]}
{\left[1+(e^{\lambda y_{(i)}}-1)^\beta\right]^2}
-
\frac{(e^{\lambda y_{(i-1)}}-1)^\beta
\left[1+\delta+(e^{\lambda y_{(i-1)}}-1)^\beta\right]}
{\left[1+(e^{\lambda y_{(i-1)}}-1)^\beta\right]^2}.
\end{equation*}

The MPS estimators $(\hat{\lambda}_{MPS},\hat{\beta}_{MPS},\hat{\delta}_{MPS})$ are obtained by solving

\begin{equation}
\frac{\partial M}{\partial \lambda}=0,\quad
\frac{\partial M}{\partial \beta}=0,\quad
\frac{\partial M}{\partial \delta}=0.
\end{equation}

These equations are nonlinear and therefore solved numerically.

\subsection{Anderson-Darling Estimation}
The Anderson-Darling estimation method is based on minimizing the Anderson-Darling goodness-of-fit statistic between the empirical distribution function and the theoretical distribution function \citep{andersonDarling1952}

The Anderson-Darling objective function is

\begin{equation}
a(\lambda,\beta,\delta)
=
-n-\frac{1}{n}
\sum_{i=1}^{n}
(2i-1)
\left[
\log G(y_{(i)})
+
\log\left(1-G(y_{(n+1-i)})\right)
\right].
\end{equation}

The estimates
$(\hat{\lambda}_{ADE},\hat{\beta}_{ADE},\hat{\delta}_{ADE})$
are obtained by minimizing $a(\lambda,\beta,\delta)$ with respect to the model parameters.

Substituting the NTLE cumulative distribution function gives

\begin{align*}
a(\lambda,\beta,\delta)
&=
-n-\frac{1}{n}
\sum_{i=1}^{n}
(2i-1)
\Bigg[
\log\left(
\frac{(e^{\lambda y_{(i)}}-1)^\beta
\left[1+\delta+(e^{\lambda y_{(i)}}-1)^\beta\right]}
{\left[1+(e^{\lambda y_{(i)}}-1)^\beta\right]^2}
\right)
\\
&\qquad+
\log\left(
1-
\frac{(e^{\lambda y_{(n+1-i)}}-1)^\beta
\left[1+\delta+(e^{\lambda y_{(n+1-i)}}-1)^\beta\right]}
{\left[1+(e^{\lambda y_{(n+1-i)}}-1)^\beta\right]^2}
\right)
\Bigg].
\end{align*}

Again, let

\begin{equation*}
u_i=(e^{\lambda y_{(i)}}-1)^\beta, \qquad i=1,2,\ldots,n.
\end{equation*}

Then

\begin{equation*}
G(y_{(i)})=\frac{u_i(1+\delta+u_i)}{(1+u_i)^2},
\end{equation*}

and

\begin{equation*}
1-G(y_{(i)})=\frac{1+(1-\delta)u_i}{(1+u_i)^2}.
\end{equation*}

Thus the Anderson-Darling objective function becomes

\begin{equation*}
a(\lambda,\beta,\delta)
=
-n-\frac{1}{n}
\sum_{i=1}^{n}
(2i-1)
\left[
\log\left(\frac{u_i(1+\delta+u_i)}{(1+u_i)^2}\right)
+
\log\left(\frac{1+(1-\delta)u_{n+1-i}}{(1+u_{n+1-i})^2}\right)
\right].
\end{equation*}

The Anderson-Darling estimators
$\hat{\lambda}_{ADE}$, $\hat{\beta}_{ADE}$ and $\hat{\delta}_{ADE}$
are obtained by minimizing $a(\lambda,\beta,\delta)$ with respect to
$\lambda$, $\beta$ and $\delta$. Hence, the estimators satisfy

\begin{equation*}
\frac{\partial a}{\partial \lambda}=0,
\qquad
\frac{\partial a}{\partial \beta}=0,
\qquad
\frac{\partial a}{\partial \delta}=0.
\end{equation*}

Since

\begin{equation*}
u_i=(e^{\lambda y_{(i)}}-1)^\beta,
\end{equation*}

the required derivatives involve

\begin{equation*}
\frac{\partial u_i}{\partial \lambda}
=
\beta y_{(i)} e^{\lambda y_{(i)}}(e^{\lambda y_{(i)}}-1)^{\beta-1},
\end{equation*}

and

\begin{equation}
\frac{\partial u_i}{\partial \beta}
=
u_i \log(e^{\lambda y_{(i)}}-1).
\end{equation}

Also, the Anderson-Darling estimators of the NTLE parameters are obtained numerically using iterative optimization techniques.

\subsection{Cramer-von Mises estimation (CVME)}

The Cramer-von Mises estimation method is another goodness-of-fit based approach that measures the squared difference between the empirical distribution function and the theoretical distribution function \citep{cramer1928}.

For ordered observations $y_{(1)}, y_{(2)}, \ldots, y_{(n)}$, the Cramer-von Mises objective function is defined as

\begin{equation}
W^{2}(\lambda,\beta,\delta)
=
\frac{1}{12n}
+
\sum_{i=1}^{n}
\left[
G(y_{(i)};\lambda,\beta,\delta)
-
\frac{2i-1}{2n}
\right]^2,
\end{equation}

where $G(y;\lambda,\beta,\delta)$ denotes the cumulative distribution function of the NTLE distribution. Compared with Anderson-Darling estimation, the Cramér-von Mises method distributes weight more evenly across the entire distribution, rather than emphasizing the tails.

Substituting the NTLE cumulative distribution function gives

\begin{equation}
c(\lambda,\beta,\delta)
=
\frac{1}{12n}
+
\sum_{i=1}^{n}
\left[
\frac{(e^{\lambda y_{(i)}}-1)^\beta
\left\{1+\delta+(e^{\lambda y_{(i)}}-1)^\beta\right\}}
{\left\{1+(e^{\lambda y_{(i)}}-1)^\beta\right\}^2}
-
\frac{2i-1}{2n}
\right]^2.
\end{equation}

The cramér--von Mises estimators $(\hat{\lambda}_{CVME},\hat{\beta}_{CVME},\hat{\delta}_{CVME})$ are obtained by minimizing $c(\lambda,\beta,\delta)$ with respect to the parameters. Since the resulting equations are nonlinear, numerical optimization methods are required.

\subsection{Percentile Estimation (PCE)}

The percentile estimation method is based on matching theoretical percentiles of the distribution with sample percentiles obtained from the ordered observations \citep{Kao1958}. it is particularly useful in situations where the distribution's quantile structure is of primary interest.

Let $y_{(i)}$ denote the $i^{\text{th}}$ ordered observation and let the corresponding theoretical cumulative probability be approximated by

\begin{equation}
p_i = \frac{i}{n+1}.
\end{equation}

The theoretical quantiles satisfy

\begin{equation*}
G(y_{(i)};\lambda,\beta,\delta) = p_i,
\end{equation*}

where $G(y;\lambda,\beta,\delta)$ denotes the cumulative distribution function of the NTLE distribution. The percentile estimators are obtained by minimizing the sum of squared differences

\begin{equation*}
\sum_{i=1}^{n}
\left[
G(y_{(i)};\lambda,\beta,\delta) - p_i
\right]^2 .
\end{equation*}

The resulting estimates of the parameters are obtained using numerical optimization procedures.









\subsection{Maximum Goodness-of-Fit Estimation}

The maximum goodness-of-fit estimation (MGFE) approach estimates the parameters by minimizing the maximum absolute deviation between the fitted distribution and the empirical distribution \citep{Luceno2006}

The MGFE objective function is defined as

\begin{equation}
M(\lambda,\beta,\delta)
=
\max_{1\le i\le n}
\left|
G(y_{(i)})-\frac{i-0.5}{n}
\right|.
\end{equation}

Substituting the NTLE cumulative distribution function yields

\begin{equation}
M(\lambda,\beta,\delta)
=
\max_{1\le i\le n}
\left|
\frac{(e^{\lambda y_{(i)}}-1)^\beta\left[1+\delta+(e^{\lambda y_{(i)}}-1)^\beta\right]}
{\left[1+(e^{\lambda y_{(i)}}-1)^\beta\right]^2}
-
\frac{i-0.5}{n}
\right|.
\end{equation}

The MGFE estimators $(\hat{\lambda}_{MGFE},\hat{\beta}_{MGFE},\hat{\delta}_{MGFE})$ are obtained by minimizing $M(\lambda,\beta,\delta)$ numerically.

\subsection{Bayesian Estimation}

In the Bayesian framework, the parameters of the NTLE distribution are treated as random variables with prior distributions \citep{gelman2014bayesian, dutta2024bayesian}.

Let the joint prior distribution of the parameter vector be
\[
\pi(\lambda,\beta,\delta).
\]
After observing the data $y$, the posterior distribution is obtained using Bayes' theorem:

\begin{equation*}
\pi(\lambda,\beta,\delta|y)
\propto
L(\lambda,\beta,\delta)\pi(\lambda,\beta,\delta).
\end{equation*}

The Bayesian estimates are derived from the posterior distribution. Under the squared error loss function, the Bayesian estimator of each parameter is given by the posterior mean
\begin{equation}
\hat{\theta}_{BAYES} = e(\theta|y).
\end{equation}

In most cases, the posterior distribution may not have a closed-form expression; Bayesian estimates are often obtained using Markov chain Monte Carlo (MCMC) methods.

\section{Simulation Study}

In this section, we conduct Monte Carlo simulation to investigate the finite–sample performance of the parameter estimators discussed in Section 3 for the New Transmuted Logistic exponential (NTLE) distribution. The objective of the simulation experiment was to assess the relative efficiency and stability of the competing estimation methods under different sample sizes and several 
parameter values

Random samples were generated from the NTLE distribution with parameter vector  $\boldsymbol{\theta}=(\lambda,\beta,\delta)$ via the inverse transform approach \citep{adesegun2023tle}. The parameter values were fixed at  $ (\lambda=1, \beta=1.5,\delta=0.5) $ and $(\lambda=0.5, \beta=1.5,\delta=0.2)$. The simulation experiment was carried out for the sample sizes $ n = 20, 50,100, 200, 500$ and  $ 1000.$. For each generated sample, the parameters $(\lambda,\beta,\delta)$ were estimated using the following ten estimation techniques: MLE, MME, LSE, WLSE, MPS, BAYES, ADE, CVME, PCE, and MGFE. Note that for the Bayes method, independent Gamma prior distributions are assumed for the parameters $\lambda$ and $\beta$, while a transformation on $\delta$ implicitly restricts it to the interval $(-1,1)$. The posterior sampling is carried out using a random--walk Metropolis--Hastings.  To evaluate the performance of the estimators, the bias, mean squared error and root mean square error were considered.

\subsection{Evaluation Metrics}

\subsection*{Bias} Bias measures the average deviation of the estimator from the true parameter value. The bias of an estimator $\hat{\theta}$ is defined as
\[
Bias(\hat{\theta}) =
\frac{1}{R}\sum_{r=1}^{R}
(\hat{\theta}_r-\theta),
\]

\noindent where $\hat{\theta}_r$ denotes the estimate obtained from the $r$th replication and $\theta$ represents the true parameter value.

\subsection*{Mean Squared Error}

The mean squared Error (MSE) is defined as
\[
MSE(\hat{\theta}) =
\frac{1}{R}\sum_{r=1}^{R}
(\hat{\theta}_r-\theta)^2.
\]

The MSE combines both the variance and the bias of an estimator and provides a useful measure of overall estimation accuracy.

\subsection*{Root Mean Squared error}

This is simply the square root of the mean squared error. It is given as:

\[
RMSE(\hat{\theta}) = \sqrt{
\frac{1}{R}\sum_{r=1}^{R}
(\hat{\theta}_r-\theta)^2.}
\]

\subsection{Simulation Results}
\noindent The results of the simulation study provide a comprehensive comparison of the estimators in terms of bias and mean squared error across different sample sizes. These findings offer useful insights into the relative efficiency and robustness of the competing estimation procedures for the NTLE distribution.

\begin{table}[h]
\centering
\caption{Bias, MSE and RMSE of the estimators for the NTLE parameters when $n=20$ and $(\lambda=1, \beta=1.5,\delta=0.5)$}
\label{tab:sim_20}
\begin{tabular}{lccccccccc}
\hline
 & \multicolumn{3}{c}{Bias} & \multicolumn{3}{c}{MSE} & \multicolumn{3}{c}{RMSE} \\
Method & $\hat{\lambda}$ & $\hat{\beta}$ & $\hat{\delta}$ &
$\hat{\lambda}$ & $\hat{\beta}$ & $\hat{\delta}$ &
$\hat{\lambda}$ & $\hat{\beta}$ & $\hat{\delta}$ \\
\hline
MLE   & 0.9523 & -0.2396 & -0.9278 & 1.8805 & 0.1747 & 1.2133 & 1.3713 & 0.4180 & 1.1015 \\
MME   & 0.5128 & -0.0782 & -0.5589 & 0.6015 & 0.1629 & 0.8963 & 0.7756 & 0.4036 & 0.9467 \\
LSE   & 0.7155 & -0.1873 & -0.6913 & 1.0864 & 0.4108 & 0.9056 & 1.0423 & 0.6409 & 0.9517 \\
WLSE   & 0.7166 & -0.2381 & -0.6127 & 1.4192 & 0.1922 & 0.8350 & 1.1913 & 0.4384 & 0.9138 \\
MPS   & 0.9855 & -0.3886 & -0.7462 & 2.6304 & 0.2702 & 1.0398 & 1.6218 & 0.5198 & 1.0197 \\
BAYES   & 0.5530 & -0.1292 & -0.5608 & 0.4002 & 0.0693 & 0.3767 & 0.6326 & 0.2632 & 0.6138 \\
ADE   & 0.6979 & -0.2137 & -0.6355 & 1.2973 & 0.1692 & 0.8258 & 1.1390 & 0.4114 & 0.9088 \\
CVME   & 0.7182 & -0.0711 & -0.7734 & 0.9302 & 0.4874 & 0.9764 & 0.9644 & 0.6982 & 0.9881 \\
PCE   & 0.7157 & -0.3867 & -0.5929 & 1.3915 & 0.2397 & 0.8176 & 1.1796 & 0.4896 & 0.9042 \\
MGFE   & 0.3480 & 0.0234 & -0.4022 & 0.2181 & 0.2015 & 0.2780 & 0.4670 & 0.4489 & 0.5273 \\
\hline
\end{tabular}
\end{table}

\begin{table}[h]
\centering
\caption{Bias, MSE and RMSE of the estimators for the NTLE parameters when $n=50$ and $(\lambda=1, \beta=1.5,\delta=0.5)$.}
\label{tab:sim_50}
\begin{tabular}{lccccccccc}
\hline
 & \multicolumn{3}{c}{Bias} & \multicolumn{3}{c}{MSE} & \multicolumn{3}{c}{RMSE} \\
Method & $\hat{\lambda}$ & $\hat{\beta}$ & $\hat{\delta}$ &
$\hat{\lambda}$ & $\hat{\beta}$ & $\hat{\delta}$ &
$\hat{\lambda}$ & $\hat{\beta}$ & $\hat{\delta}$ \\
\hline
MLE   & 0.2539 & -0.0637 & -0.3578 & 0.2861 & 0.0339 & 0.3930 & 0.5349 & 0.1840 & 0.6269 \\
MME   & 0.3773 & -0.1768 & -0.5463 & 0.3755 & 0.1218 & 0.7507 & 0.6128 & 0.3489 & 0.8664 \\
LSE   & 0.2425 & -0.0918 & -0.3021 & 0.2594 & 0.0667 & 0.3894 & 0.5093 & 0.2583 & 0.6240 \\
WLSE   & 0.1986 & -0.0722 & -0.2558 & 0.2090 & 0.0512 & 0.3240 & 0.4572 & 0.2263 & 0.5692 \\
MPS   & 0.1933 & -0.1520 & -0.2556 & 0.2730 & 0.0542 & 0.3556 & 0.5225 & 0.2329 & 0.5963 \\
BAYES   & 0.4203 & -0.1301 & -0.6022 & 0.2290 & 0.0431 & 0.4324 & 0.4786 & 0.2077 & 0.6576 \\
ADE   & 0.0956 & -0.0378 & -0.1203 & 0.0747 & 0.0435 & 0.1824 & 0.2733 & 0.2086 & 0.4271 \\
CVME   & 0.2492 & -0.0474 & -0.3112 & 0.2515 & 0.0651 & 0.3996 & 0.5015 & 0.2552 & 0.6322 \\
PCE   & 0.3623 & -0.3230 & -0.4232 & 0.4375 & 0.1596 & 0.5526 & 0.6615 & 0.3996 & 0.7433 \\
MGFE  & 0.2506 & -0.0195 & -0.4074 & 0.1218 & 0.0436 & 0.3322 & 0.3491 & 0.2088 & 0.5764 \\
\hline
\end{tabular}
\end{table}

\begin{table}[h]
\centering
\caption{Bias, MSE and RMSE of the estimators for the NTLE parameter when $n=100$ and $(\lambda=1, \beta=1.5,\delta=0.5)$}
\label{tab:sim_100}
\begin{tabular}{lccccccccc}
\hline
 & \multicolumn{3}{c}{Bias} & \multicolumn{3}{c}{MSE} & \multicolumn{3}{c}{RMSE} \\
Method & $\hat{\lambda}$ & $\hat{\beta}$ & $\hat{\delta}$ &
$\hat{\lambda}$ & $\hat{\beta}$ & $\hat{\delta}$ &
$\hat{\lambda}$ & $\hat{\beta}$ & $\hat{\delta}$ \\
\hline
MLE   & 0.3204 & -0.1300 & -0.5469 & 0.2461 & 0.0479 & 0.6101 & 0.4961 & 0.2189 & 0.7811 \\
MME   & 0.3934 & -0.1885 & -0.6476 & 0.3629 & 0.0873 & 0.8210 & 0.6024 & 0.2954 & 0.9061 \\
LSE   & 0.3062 & -0.1158 & -0.4653 & 0.3251 & 0.0718 & 0.5430 & 0.5702 & 0.2680 & 0.7369 \\
WLSE   & 0.3324 & -0.1376 & -0.5286 & 0.3219 & 0.0656 & 0.6133 & 0.5674 & 0.2561 & 0.7831 \\
MPS   & 0.3341 & -0.2081 & -0.4899 & 0.3638 & 0.0857 & 0.6845 & 0.6032 & 0.2928 & 0.8273 \\
BAYES   & 0.3675 & -0.1484 & -0.6163 & 0.1675 & 0.0346 & 0.4717 & 0.4093 & 0.1860 & 0.6868 \\
ADE   & 0.4046 & -0.1777 & -0.6214 & 0.3887 & 0.0808 & 0.7736 & 0.6235 & 0.2842 & 0.8795 \\
CVME   & 0.3385 & -0.0969 & -0.5391 & 0.3197 & 0.0681 & 0.5768 & 0.5654 & 0.2609 & 0.7595 \\
PCE   & 0.1281 & -0.1890 & -0.2247 & 0.1581 & 0.0615 & 0.3490 & 0.3977 & 0.2480 & 0.5908 \\
MGFE   & 0.2993 & -0.0844 & -0.5713 & 0.1411 & 0.0281 & 0.5283 & 0.3757 & 0.1675 & 0.7269 \\
\hline
\end{tabular}
\end{table}

\begin{table}[h]
\centering
\caption{Bias, MSE and RMSE of the estimators for the NTLE parameters when when $n=200$ and $(\lambda=1, \beta=1.5,\delta=0.5)$}
\label{tab:sim_200}
\begin{tabular}{lccccccccc}
\hline
 & \multicolumn{3}{c}{Bias} & \multicolumn{3}{c}{MSE} & \multicolumn{3}{c}{RMSE} \\
Method & $\hat{\lambda}$ & $\hat{\beta}$ & $\hat{\delta}$ &
$\hat{\lambda}$ & $\hat{\beta}$ & $\hat{\delta}$ &
$\hat{\lambda}$ & $\hat{\beta}$ & $\hat{\delta}$ \\
\hline
MLE   & 0.2676 & -0.1233 & -0.3961 & 0.2190 & 0.0465 & 0.4367 & 0.468 & 0.2156 & 0.6609 \\
MME   & 0.1754 & -0.1591 & -0.2117 & 0.2200 & 0.0549 & 0.3769 & 0.4690 & 0.2344 & 0.614 \\
LSE   & 0.3240 & -0.1266 & -0.4566 & 0.2948 & 0.0595 & 0.4690 & 0.5429 & 0.2440 & 0.6849 \\
WLSE   & 0.3801 & -0.1623 & -0.4876 & 0.4297 & 0.0824 & 0.5946 & 0.6555 & 0.2870 & 0.7711 \\
MPS   & 0.1945 & -0.1606 & -0.2492 & 0.2205 & 0.0592 & 0.4233 & 0.4696 & 0.2432 & 0.6506 \\
BAYES   & 0.3366 & -0.1559 & -0.4939 & 0.1939 & 0.0418 & 0.3850 & 0.4404 & 0.2046 & 0.6205 \\
ADE   & 0.3944 & -0.1626 & -0.5104 & 0.4366 & 0.0834 & 0.6068 & 0.6608 & 0.2888 & 0.7790 \\
CVME   & 0.4405 & -0.1619 & -0.5599 & 0.4957 & 0.0930 & 0.6652 & 0.7041 & 0.3050 & 0.8156 \\
PCE   & -0.0022 & -0.1621 & 0.0322 & 0.0673 & 0.0439 & 0.1628 & 0.2595 & 0.2096 & 0.4035 \\
MGFE   & 0.2869 & -0.0807 & -0.4880 & 0.1289 & 0.0272 & 0.3477 & 0.3591 & 0.1649 & 0.5897 \\
\hline
\end{tabular}
\end{table}

\begin{table}[h]
\centering
\caption{Bias, MSE and RMSE of the estimators for the NTLE parameters when $n=500$ and $(\lambda=1, \beta=1.5,\delta=0.5)$}
\label{tab:sim_500}
\begin{tabular}{lccccccccc}
\hline
 & \multicolumn{3}{c}{Bias} & \multicolumn{3}{c}{MSE} & \multicolumn{3}{c}{RMSE} \\
Method & $\hat{\lambda}$ & $\hat{\beta}$ & $\hat{\delta}$ &
$\hat{\lambda}$ & $\hat{\beta}$ & $\hat{\delta}$ &
$\hat{\lambda}$ & $\hat{\beta}$ & $\hat{\delta}$ \\
\hline
MLE   & 0.1334 & -0.0472 & -0.2224 & 0.0775 & 0.0116 & 0.1896 & 0.2784 & 0.1078 & 0.4354 \\
MME   & 0.2363 & -0.1403 & -0.3594 & 0.1748 & 0.0384 & 0.3838 & 0.4180 & 0.1959 & 0.6195 \\
LSE   & 0.3124 & -0.1058 & -0.4661 & 0.2443 & 0.0368 & 0.4432 & 0.4943 & 0.1918 & 0.6657 \\
WLSE   & 0.2031 & -0.0630 & -0.3345 & 0.1137 & 0.0159 & 0.2584 & 0.3372 & 0.1262 & 0.5083 \\
MPS   & 0.0532 & -0.0678 & -0.0641 & 0.0739 & 0.0132 & 0.1757 & 0.2718 & 0.1150 & 0.4191 \\
BAYES   & 0.2705 & -0.1031 & -0.4236 & 0.1193 & 0.0196 & 0.2800 & 0.3454 & 0.1398 & 0.5292 \\
ADE   & 0.2546 & -0.0876 & -0.3891 & 0.1843 & 0.0298 & 0.3621 & 0.4293 & 0.1725 & 0.6017 \\
CVME   & 0.3685 & -0.1248 & -0.5205 & 0.3289 & 0.0528 & 0.5423 & 0.5735 & 0.2298 & 0.7364 \\
PCE   & 0.0251 & -0.1083 & -0.0038 & 0.0759 & 0.0250 & 0.1547 & 0.2755 & 0.1582 & 0.3933 \\
MGFE   & 0.2464 & -0.0597 & -0.4204 & 0.1059 & 0.0150 & 0.2830 & 0.3254 & 0.1225 & 0.5319 \\
\hline
\end{tabular}
\end{table}

\begin{table}[h]
\centering
\caption{Bias, MSE and RMSE of the estimators for the NTLE parameters when $n=1000$ and $(\lambda=1, \beta=1.5,\delta=0.5)$.}
\label{tab:sim_1000}
\begin{tabular}{lccccccccc}
\hline
 & \multicolumn{3}{c}{Bias} & \multicolumn{3}{c}{MSE} & \multicolumn{3}{c}{RMSE} \\
Method & $\hat{\lambda}$ & $\hat{\beta}$ & $\hat{\delta}$ &
$\hat{\lambda}$ & $\hat{\beta}$ & $\hat{\delta}$ &
$\hat{\lambda}$ & $\hat{\beta}$ & $\hat{\delta}$ \\
\hline
MLE   & 0.0708 & -0.0129 & -0.1324 & 0.0229 & 0.0035 & 0.0865 & 0.1514 & 0.0594 & 0.2941 \\
MME   & 0.1657 & -0.1040 & -0.2646 & 0.1109 & 0.0222 & 0.3024 & 0.3331 & 0.1491 & 0.5499 \\
LSE   & 0.3640 & -0.1139 & -0.5297 & 0.3114 & 0.0465 & 0.5497 & 0.5580 & 0.2157 & 0.7414 \\
WLSE   & 0.0909 & -0.0187 & -0.1604 & 0.0340 & 0.0046 & 0.1139 & 0.1843 & 0.0677 & 0.3374 \\
MPS   & 0.0337 & -0.0397 & -0.0339 & 0.0553 & 0.0122 & 0.1299 & 0.2352 & 0.1104 & 0.3604 \\
BAYES   & 0.1540 & -0.0455 & -0.2615 & 0.0586 & 0.0082 & 0.1562 & 0.2422 & 0.0903 & 0.3952 \\
ADE   & 0.0771 & -0.0198 & -0.1340 & 0.0330 & 0.0046 & 0.1102 & 0.1817 & 0.0681 & 0.3320 \\
CVME   & 0.3641 & -0.1114 & -0.5305 & 0.3106 & 0.0461 & 0.5493 & 0.5573 & 0.2146 & 0.7412 \\
PCE   & 0.0562 & -0.1004 & -0.0593 & 0.0908 & 0.0249 & 0.1913 & 0.3013 & 0.1576 & 0.4374 \\
MGFE   & 0.2374 & -0.0287 & -0.4304 & 0.0818 & 0.0084 & 0.2576 & 0.2860 & 0.0915 & 0.5076 \\

\hline
\end{tabular}
\end{table}

\begin{table}[h]
\centering
\caption{Bias, MSE and RMSE of the estimators for the NTLE parameters when $n=20$ and $(\lambda=0.5, \beta=1.5,\delta=0.2)$}
\label{tab:sim_20a}
\begin{tabular}{lccccccccc}
\hline
 & \multicolumn{3}{c}{Bias} & \multicolumn{3}{c}{MSE} & \multicolumn{3}{c}{RMSE} \\
Method & $\hat{\lambda}$ & $\hat{\beta}$ & $\hat{\delta}$ &
$\hat{\lambda}$ & $\hat{\beta}$ & $\hat{\delta}$ &
$\hat{\lambda}$ & $\hat{\beta}$ & $\hat{\delta}$ \\
\hline
MLE   & 0.2636 & -0.1702 & -0.4939 & 0.2195 & 0.1199 & 0.5446 & 0.4685 & 0.3463 & 0.7380 \\
MME   & 0.1999 & -0.0444 & -0.4571 & 0.0966 & 0.1894 & 0.6232 & 0.3109 & 0.4352 & 0.7894 \\
LSE   & 0.2486 & -0.1931 & -0.4041 & 0.1687 & 0.4080 & 0.5820 & 0.4108 & 0.6387 & 0.7629 \\
WLSE   & 0.2464 & -0.2403 & -0.3180 & 0.2306 & 0.1925 & 0.5578 & 0.4802 & 0.4387 & 0.7469 \\
MPS   & 0.3086 & -0.3463 & -0.3547 & 0.3930 & 0.2333 & 0.5642 & 0.6269 & 0.4830 & 0.7512 \\
BAYES   & 0.2627 & -0.1830 & -0.3907 & 0.0984 & 0.0937 & 0.2087 & 0.3136 & 0.3060 & 0.4569 \\
ADE   & 0.2584 & -0.2402 & -0.4145 & 0.2107 & 0.1756 & 0.6010 & 0.4591 & 0.4191 & 0.7752 \\
CVME   & 0.2950 & -0.1123 & -0.5694 & 0.1828 & 0.5034 & 0.706 & 0.4275 & 0.7095 & 0.8402 \\
PCE   & 0.2658 & -0.4007 & -0.3664 & 0.2270 & 0.2615 & 0.5640 & 0.4765 & 0.5113 & 0.7510 \\
MGFE   & 0.1117 & -0.0060 & -0.2107 & 0.0266 & 0.1809 & 0.1674 & 0.1631 & 0.4253 & 0.4091 \\
\hline
\end{tabular}
\end{table}

\begin{table}[h]
\centering
\caption{Bias, MSE and RMSE of the estimators for the NTLE parameters when $n=50$ and $(\lambda=0.5, \beta=1.5,\delta=0.2)$}
\label{tab:sim_50b}
\begin{tabular}{lccccccccc}
\hline
 & \multicolumn{3}{c}{Bias} & \multicolumn{3}{c}{MSE} & \multicolumn{3}{c}{RMSE} \\
Method & $\hat{\lambda}$ & $\hat{\beta}$ & $\hat{\delta}$ &
$\hat{\lambda}$ & $\hat{\beta}$ & $\hat{\delta}$ &
$\hat{\lambda}$ & $\hat{\beta}$ & $\hat{\delta}$ \\
\hline
MLE   & 0.0761 & -0.0826 & -0.1389 & 0.0610 & 0.0471 & 0.3322 & 0.2469 & 0.2170 & 0.5763 \\ 
MME   & 0.1002 & -0.1487 & -0.2653 & 0.0488 & 0.1106 & 0.4974 & 0.2209 & 0.3325 & 0.7053 \\
LSE   & 0.0487 & -0.0980 & -0.0519 & 0.0386 & 0.0666 & 0.2927 & 0.1964 & 0.2580 & 0.5410 \\
WLSE   & 0.0209 & -0.0731 & 0.0391 & 0.0322 & 0.0504 & 0.2581 & 0.1795 & 0.2244 & 0.5080 \\ 
MPS   & 0.0235 & -0.1459 & 0.0257 & 0.0444 & 0.0533 & 0.2883 & 0.2107 & 0.2309 & 0.5370 \\
BAYES   & 0.1766 & -0.1772 & -0.4550 & 0.0441 & 0.0603 & 0.2734 & 0.2099 & 0.2457 & 0.5229 \\ 
ADE   & 0.0185 & -0.0677 & 0.0398 & 0.0306 & 0.0447 & 0.2567 & 0.1749 & 0.2115 & 0.5067 \\
CVME   & 0.0729 & -0.0597 & -0.1605 & 0.0359 & 0.0641 & 0.2858 & 0.1894 & 0.2531 & 0.5346 \\ 
PCE   & 0.0641 & -0.3044 & -0.0466 & 0.0642 & 0.1507 & 0.4208 & 0.2534 & 0.3882 & 0.6487 \\
MGFE   & 0.0665 & -0.0373 & -0.1823 & 0.0223 & 0.0398 & 0.1989 & 0.1492 & 0.1996 & 0.4460 \\
\hline
\end{tabular}
\end{table}

\begin{table}[h]
\centering
\caption{Bias, MSE and RMSE of the estimators for the NTLE parameter when $n=100$ and $(\lambda=0.5, \beta=1.5,\delta=0.2)$}
\label{tab:sim_100c}
\begin{tabular}{lccccccccc}
\hline
 & \multicolumn{3}{c}{Bias} & \multicolumn{3}{c}{MSE} & \multicolumn{3}{c}{RMSE} \\
Method & $\hat{\lambda}$ & $\hat{\beta}$ & $\hat{\delta}$ &
$\hat{\lambda}$ & $\hat{\beta}$ & $\hat{\delta}$ &
$\hat{\lambda}$ & $\hat{\beta}$ & $\hat{\delta}$ \\
\hline
MLE   & 0.0431 & -0.1028 & -0.1110 & 0.0264 & 0.0261 & 0.2409 & 0.1626 & 0.1616 & 0.4908 \\
MME   & 0.1418 & -0.2956 & -0.2914 & 0.0931 & 0.1625 & 0.6875 & 0.3052 & 0.4031 & 0.8291 \\ 
LSE   & 0.1407 & -0.2140 & -0.3596 & 0.0695 & 0.0935 & 0.4400 & 0.2636 & 0.3059 & 0.6633 \\
WLSE   & 0.0815 & -0.1569 & -0.1864 & 0.0515 & 0.0607 & 0.3283 & 0.2269 & 0.2464 & 0.5729 \\
MPS   & 0.0199 & -0.1578 & -0.0074 & 0.0316 & 0.0400 & 0.2979 & 0.1779 & 0.2000 & 0.5458 \\
BAYES   & 0.1063 & -0.1627 & -0.2455 & 0.0319 & 0.0462 & 0.1504 & 0.1785 & 0.2148 & 0.3878 \\
ADE   & 0.1919 & -0.2433 & -0.4322 & 0.1220 & 0.1293 & 0.5801 & 0.3493 & 0.3595 & 0.7616 \\
CVME   & 0.1633 & -0.2195 & -0.4389 & 0.0748 & 0.0920 & 0.5096 & 0.2735 & 0.3033 & 0.7138 \\ 
PCE   & 0.0305 & -0.2373 & 0.0365 & 0.0571 & 0.1014 & 0.3331 & 0.2389 & 0.3184 & 0.5771 \\
MGFE   & 0.0681 & -0.1066 & -0.2184 & 0.0220 & 0.0376 & 0.1808 & 0.1484 & 0.1940 & 0.4252 \\
\hline
\end{tabular}
\end{table}

\begin{table}[h]
\centering
\caption{Bias, MSE and RMSE of the estimators for the NTLE parameters when $n=200$ and $(\lambda=0.5, \beta=1.5,\delta=0.2)$}
\label{tab:sim_200d}
\begin{tabular}{lccccccccc}
\hline
 & \multicolumn{3}{c}{Bias} & \multicolumn{3}{c}{MSE} & \multicolumn{3}{c}{RMSE} \\
Method & $\hat{\lambda}$ & $\hat{\beta}$ & $\hat{\delta}$ &
$\hat{\lambda}$ & $\hat{\beta}$ & $\hat{\delta}$ &
$\hat{\lambda}$ & $\hat{\beta}$ & $\hat{\delta}$ \\
\hline
MLE & 0.0772 & -0.1297 & -0.2010 & 0.0319 & 0.0495 & 0.2897 & 0.1787 & 0.2225 & 0.5383 \\
MME & 0.0985 & -0.1728 & -0.2131 & 0.0582 & 0.0726 & 0.4008 & 0.2411 & 0.2694 & 0.6331 \\
LSE & 0.0688 & -0.1152 & -0.1849 & 0.0230 & 0.0452 & 0.2191 & 0.1518 & 0.2127 & 0.4681 \\
WLSE & 0.0555 & -0.1164 & -0.1281 & 0.0255 & 0.0459 & 0.2500 & 0.1596 & 0.2143 & 0.5000 \\
MPS & 0.0760 & -0.1915 & -0.1236 & 0.0508 & 0.0825 & 0.4317 & 0.2253 & 0.2872 & 0.6570 \\
BAYES & 0.1230 & -0.1796 & -0.3221 & 0.0346 & 0.0544 & 0.2508 & 0.1860 & 0.2332 & 0.5008 \\
ADE & 0.1335 & -0.1882 & -0.2806 & 0.0774 & 0.1009 & 0.4594 & 0.2781 & 0.3177 & 0.6778 \\
CVME & 0.0694 & -0.1034 & -0.1893 & 0.0224 & 0.0429 & 0.2162 & 0.1498 & 0.2072 & 0.4649 \\
PCE & -0.0427 & -0.1391 & 0.2286 & 0.0163 & 0.0340 & 0.2596 & 0.1276 & 0.1845 & 0.5095 \\
MGFE & 0.0669 & -0.0931 & -0.2255 & 0.0132 & 0.0283 & 0.1657 & 0.1150 & 0.1682 & 0.4071 \\
\hline
\end{tabular}
\end{table}

\begin{table}[h]
\centering
\caption{Bias, MSE and RMSE of the estimators for the NTLE parameters when $n=500$ and $(\lambda=0.5, \beta=1.5,\delta=0.2)$}
\label{tab:sim_500e}
\begin{tabular}{lccccccccc}
\hline
 & \multicolumn{3}{c}{Bias} & \multicolumn{3}{c}{MSE} & \multicolumn{3}{c}{RMSE} \\
Method & $\hat{\lambda}$ & $\hat{\beta}$ & $\hat{\delta}$ &
$\hat{\lambda}$ & $\hat{\beta}$ & $\hat{\delta}$ &
$\hat{\lambda}$ & $\hat{\beta}$ & $\hat{\delta}$ \\
\hline
MLE   & 0.0672 & -0.0907 & -0.2069 & 0.0209 & 0.0226 & 0.2093 & 0.1445 & 0.1503 & 0.4575 \\
MME   & 0.1183 & -0.1915 & -0.3054 & 0.0518 & 0.0688 & 0.4254 & 0.2277 & 0.2624 & 0.6522 \\
LSE   & 0.0983 & -0.1351 & -0.2362 & 0.0467 & 0.0562 & 0.3239 & 0.2160 & 0.2370 & 0.5692 \\
WLSE   & 0.0379 & -0.0681 & -0.1005 & 0.0145 & 0.0177 & 0.1449 & 0.1206 & 0.1331 & 0.3807 \\
MPS   & 0.0349 & -0.1007 & -0.0670 & 0.0220 & 0.0244 & 0.2486 & 0.1483 & 0.1563 & 0.4986 \\
BAYES   & 0.0981 & -0.1347 & -0.2710 & 0.0246 & 0.0310 & 0.2067 & 0.1569 & 0.1761 & 0.4546 \\
ADE   & 0.0730 & -0.1054 & -0.1846 & 0.0310 & 0.0402 & 0.2503 & 0.1762 & 0.2005 & 0.5003 \\ 
CVME   & 0.1257 & -0.1558 & -0.3030 & 0.0595 & 0.0723 & 0.3933 & 0.2440 & 0.2690 & 0.6272 \\
PCE   & -0.0149 & -0.1024 & 0.1386 & 0.0213 & 0.0264 & 0.2455 & 0.1458 & 0.1623 & 0.4955 \\
MGFE   & 0.0532 & -0.0791 & -0.1487 & 0.0165 & 0.0248 & 0.1674 & 0.1283 & 0.1573 & 0.4092 \\
\hline
\end{tabular}
\end{table}

\begin{table}[h]
\centering
\caption{Bias, MSE and RMSE of the estimators for the NTLE parameters when $n=1000$ and $(\lambda=0.5, \beta=1.5,\delta=0.2)$}
\label{tab:sim_1000f}
\begin{tabular}{lccccccccc}
\hline
 & \multicolumn{3}{c}{Bias} & \multicolumn{3}{c}{MSE} & \multicolumn{3}{c}{RMSE} \\
Method & $\hat{\lambda}$ & $\hat{\beta}$ & $\hat{\delta}$ &
$\hat{\lambda}$ & $\hat{\beta}$ & $\hat{\delta}$ &
$\hat{\lambda}$ & $\hat{\beta}$ & $\hat{\delta}$ \\
\hline
MLE   & 0.0743 & -0.0992 & -0.2027 & 0.0304 & 0.0385 & 0.2852 & 0.1744 & 0.1962 & 0.5341 \\
MME   & 0.0674 & -0.1017 & -0.2122 & 0.0216 & 0.0257 & 0.2289 & 0.1468 & 0.1604 & 0.4784 \\
LSE   & 0.0962 & -0.1218 & -0.2165 & 0.0508 & 0.0627 & 0.3631 & 0.2254 & 0.2503 & 0.6026 \\
WLSE   & 0.0623 & -0.0917 & -0.1395 & 0.0341 & 0.0397 & 0.2788 & 0.1846 & 0.1992 & 0.5280 \\
MPS   & 0.0831 & -0.1356 & -0.1996 & 0.0407 & 0.0518 & 0.3851 & 0.2019 & 0.2276 & 0.6205 \\
BAYES   & 0.0677 & -0.1007 & -0.1845 & 0.0200 & 0.0250 & 0.1849 & 0.1415 & 0.1581 & 0.4301 \\
ADE   & 0.0639 & -0.0909 & -0.1281 & 0.0390 & 0.0460 & 0.2838 & 0.1976 & 0.2145 & 0.5327 \\
CVME   & 0.1123 & -0.1269 & -0.2805 & 0.0520 & 0.0634 & 0.3669 & 0.2280 & 0.2519 & 0.6058 \\
PCE   & 0.0038 & -0.0819 & 0.0818 & 0.0292 & 0.0308 & 0.2635 & 0.1708 & 0.1754 & 0.5133 \\
MGFE   & 0.0619 & -0.0650 & -0.1925 & 0.0171 & 0.0242 & 0.1745 & 0.1310 & 0.1557 & 0.4177 \\
\hline
\end{tabular}
\end{table}

The simulation results in Tables~\ref{tab:sim_20}--\ref{tab:sim_1000f} show a clear improvement in estimator performance as the sample size increases. For both parameter settings, the absolute bias, MSE, and RMSE generally decline with increasing $n$, indicating that all methods benefit from larger samples.

For the first setting, $(\lambda,\beta,\delta)=(1,1.5,0.5)$, the small-sample case in Table~\ref{tab:sim_20} shows noticeable bias for most methods, especially in estimating $\delta$. among the competing estimators, MGFE performs particularly well at $n=20$, giving the smallest MSE and RMSE for $\hat{\lambda}$ and $\hat{\delta}$, while the Bayesian method performs best for $\hat{\beta}$. as the sample size increases (Tables~\ref{tab:sim_50}--\ref{tab:sim_1000}), the estimators become more stable. In the larger samples, MLE, MPS, PCE, and MGFE are generally the most competitive, with MPS and PCE showing especially strong performance at $n=500$ and $n=1000$.

A similar pattern is observed for the second setting, $(\lambda,\beta,\delta)=(0.5,1.5,0.2)$, reported in Tables~\ref{tab:sim_20a}--\ref{tab:sim_1000f}. at $n=20$, MGFE again stands out, particularly for $\hat{\lambda}$ and $\hat{\delta}$, while the Bayesian estimator remains competitive. For moderate and large sample sizes, no single method dominates uniformly across all three parameters, but MGFE, WLSE, MLE, and Bayesian estimation repeatedly show good overall performance. In particular, MGFE gives some of the smallest MSE and RMSE values in Tables~\ref{tab:sim_50b}, \ref{tab:sim_200d}, \ref{tab:sim_500e}, and \ref{tab:sim_1000f}.

\clearpage
\section{Application with infectious diseases death data}

In this section, the practical usefulness of the proposed New Transmuted Logistic-exponential (NTLE) distribution is illustrated by means of a real dataset. The data used is the deaths due to COVID-19 in Egypt in 2020 \citep{el2021exponentiated}.  The main objective is to assess the ability of the NTLE model to describe the observed data and to compare its performance with its base distributions, namely, the exponential and logistic-exponential (Le) distribution.

\subsection{Comparison of NTLE with base distributions}
Here, for the mortality data, the NTLE parameters were MLE and MGFE estimation methods, while the competing models were fitted by maximum likelihood estimation. To examine the relative adequacy of the fitted models, the following criteria were considered: the log-likelihood value, Akaike information criterion (AIC), Bayesian information criterion (BIC), corrected Akaike information criterion (CAIC), Hannan-Quinn information criterion (HQIC), the Kolmogorov-Smirnov (K--S) statistic, and the corresponding $p$-value.

Let $\ell(\hat{\Phi})$ denote the maximized log-likelihood, $p$ the number of estimated parameters, and $m$ the sample size. Then, the AIC, BIC, CAIC, HQIC, and K-S statistics are defined, respectively as: 
\[
AIC=-2\ell(\hat{\Phi})+2p.
\]
\[
BIC=-2\ell(\hat{\Phi})+p\log m.
\]
\[
CAIC=-2\ell(\hat{\Phi})+p(\log m+1).
\]
\[
HQIC=-2\ell(\hat{\Phi})+2p\log(\log m).
\]
\[
D_m=\sup_y \left|G_m(y)-G(y;\hat{\Phi})\right|,
\]
where $G_m(y)$ is the empirical distribution function and $G(y;\hat{\theta})$ is the fitted cumulative distribution function. A smaller value of $D_m$ indicates closer agreement between the fitted model and the observed data.

The corresponding $p$-value is obtained from the Kolmogorov-Smirnov test and is used to assess the adequacy of the fitted distribution. A larger $p$-value suggests that the fitted model is more consistent with the observed data. The fitted models were also examined graphically using the histogram with fitted density curves and the empirical cumulative distribution function plotted against the fitted cumulative distribution functions.

\begin{sidewaystable}[htbp]
\centering
\caption{parameter estimates and goodness-of-fit measures for the fitted models.}
\label{tab:model_comparison}
\begin{tabular}{lccccccccccc}
\hline
Model &  $\hat{\alpha}$ & $\hat{\beta}$  & $\hat{\lambda}$ & loglik & AIC & BIC & CAIC & HQIC & K-S & $p$-value \\
\hline
exponential   & 0.0036 &  &  & -510.1720 & 1022.3440 & 1024.6878 & 1025.6878 & 1023.2815 & 0.1408 & 0.0945 \\
Le   & 0.0050 & 0.6571 &  & -502.8429 & 1009.6857 & 1014.3733 & 1016.3733 & 1011.5607 & 0.1286 & 0.1564 \\
NTLE (MLE)   & 0.0085 & 0.4732 & -0.5830 & -500.4086 & 1006.8171 & 1013.8485 & 1016.8485 & 1009.6296 & 0.0967 & 0.4669 \\
NTLE (MGF)   & 0.0134 & 0.2935 & -0.9272 & -500.7561 & 1007.5122 & 1014.5436 & 1017.5436 & 1010.3247 & 0.0791 & 0.7216 \\
\hline
\end{tabular}
\end{sidewaystable}

\clearpage
Table~\ref{tab:model_comparison} presents the parameter estimates and goodness-of-fit statistics for the competing models fitted to the data. Although NTLE estimated through MLE shows the smallest AIC value, the NTLE model estimated through the MGFE method yields the smallest Kolmogorov-Smirnov statistic and the largest $p$-value, indicating a closer agreement between the fitted distribution and the empirical data. In all, the results suggest that incorporating the additional flexibility of the NTLE distribution allows the model to capture the underlying structure of the data more effectively than the base alternatives.

\begin{figure}[H]
\centering
\begin{minipage}{0.48\textwidth}
    \centering
    \includegraphics[width=\linewidth]{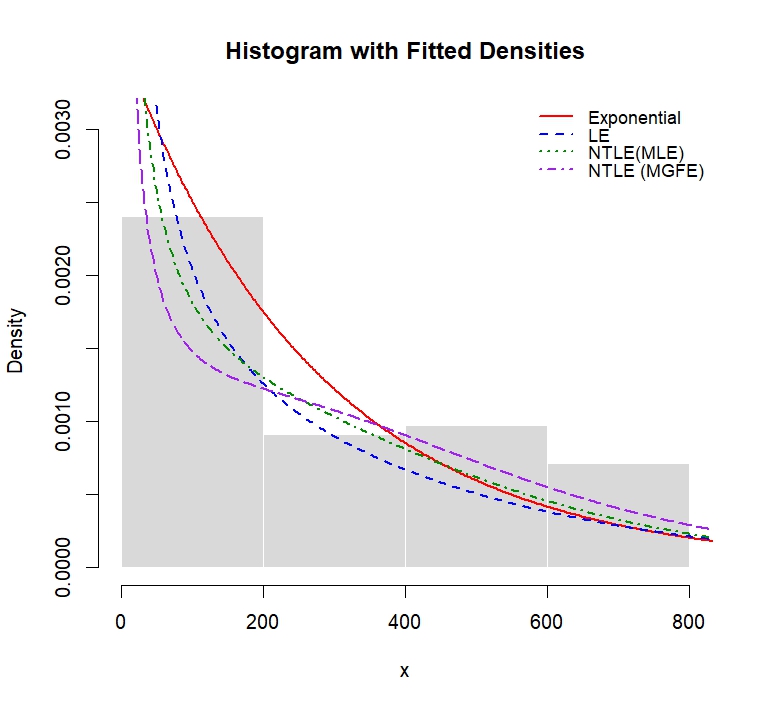}
    \caption{Histogram of COVID - 19 deaths in Egypt with fitted density functions of the competing models.}
    \label{hist}
\end{minipage}\hfill
\begin{minipage}{0.48\textwidth}
    \centering
    \includegraphics[width=\linewidth]{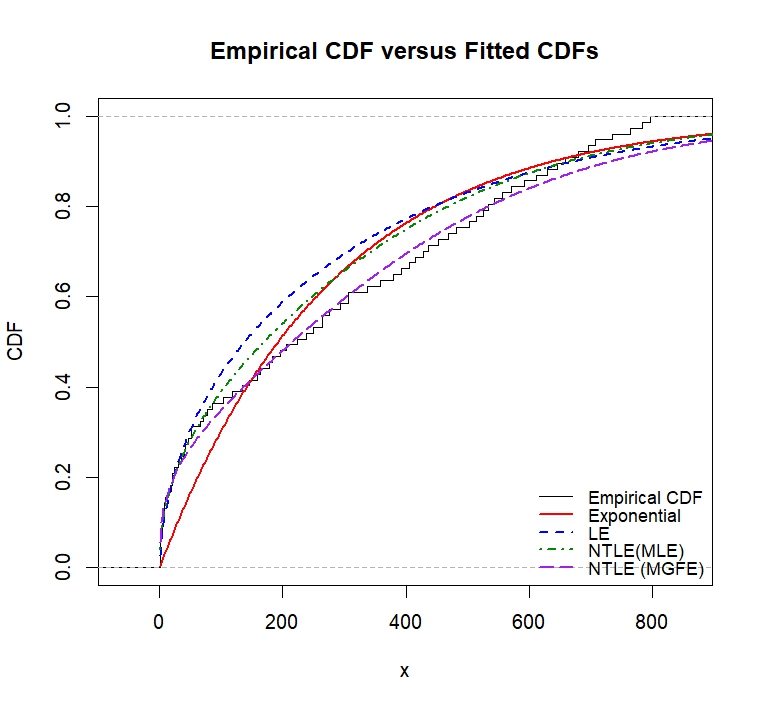}
\caption{Empirical cumulative distribution function and fitted cumulative distribution functions of the competing models.}
    \label{emp}
\end{minipage}
\end{figure}

The histogram with the fitted distributions is shown in Figure \ref{hist} while the plot of the empirical cumulative distribution function and fitted cumulative distribution functions of the models is given in Figure \ref{emp}. The visual comparisons shown in Figures~\ref{hist} and \ref{emp} further support the adequacy of the fitted NTLE model. The fitted NTLE(MGFE) density follows the shape of the histogram closely, while the fitted cumulative distribution function tracks the empirical CDF more accurately than other compared models.

\section{Discussion of Results}

The simulation results suggest that MGFE is a strong and reliable choice, especially in small and moderate samples, while MLE and a few alternative methods become increasingly competitive as the sample size grows. This findings also confirm that estimation of the transmutation parameter $\delta$ is generally more difficult than estimation of $\lambda$ and $\beta$, particularly in smaller samples.

In addition, this study also provides useful insight into the performance of the New Transmuted Logistic-exponential (NTLE) distribution when applied to real lifetime data. The NTLE model was fitted using the maximum goodness-of-fit estimation (MGFE) and MLE method, while the base models (Exponential and Logistic-Exponential) distributions were estimated using maximum likelihood procedures and compared using several information criteria and goodness-of-fit statistics (loglikelihood, AIC, BIC, CAIC, HQIC, and K-S test) 

The results reported in Table~\ref{tab:model_comparison} indicate that the NTLE model provides a competitive fit to the data when compared with the base distributions. In particular, the values of the information criteria and the K-S statistic suggest that the NTLE distribution is capable of modelling the observed data with a high degree of accuracy. The relatively small goodness-of-fit statistics demonstrate that the NTLE model successfully captures both the central tendency and the tail behaviour of the dataset.

The graphical diagnostics further support these findings. as shown in Figure~\ref{hist}, the fitted NTLE density curve closely follows the shape of the empirical histogram, indicating that the model accurately represents the distribution of the observed values. Likewise, the empirical cumulative distribution function plotted in Figure~\ref{emp} shows that the fitted NTLE cumulative distribution aligns closely with the empirical CDF across the entire range of the data. This visualisation supports the conclusions drawn from the numerical goodness-of-fit measures. Consequently, the NTLE model serves as a useful alternative to existing exponnetial and logistic-exponential distributions in practical reliability and lifetime data analysis.

\section{Conclusion}

This paper introduced and studied some new statistical properties of the NTLE distribution yet to be explored in literature, including Shannon entropy, R\'enyi entropy, stochastic ordering, mode, stress-strength reliability measure, residual life functions (mean and reverse), incomplete moments, Bonferroni and Lorenz curves. These theoretical developments provide a deeper understanding of the structural characteristics of NTLE and its potential applicability in practical data analysis. Comprehensive investigation of estimation techniques was also carried out. In particular, ten estimation methods were considered, including maximum likelihood, method of moments, least squares, weighted least, maximum product of spacings, Bayesian, Anderson–Darling, Cramér-von Mises, percentile estimation, and maximum goodness-of-fit estimation. A Monte Carlo simulation study was conducted to examine the performance of these estimators. The results of the simulation experiment indicated that the estimators exhibit different levels of accuracy depending on the estimation approach and the sample size, with likelihood-based and goodness-of-fit-based estimators generally providing stable and reliable results.

To demonstrate the practical relevance of NTLE distribution, an application using an infectious disease mortality data (COVID-19 deaths) was compared with its base models (Exponential and Logistic–Exponential). Model comparison was carried out using a variety of goodness-of-fit measures, including the log-likelihood value, AIC, BIC, CAIC, HQIC, and the Kolmogorov–Smirnov statistic. The empirical results, together with graphical diagnostics based on the fitted densities and cumulative distribution functions, showed that the NTLE distribution provides a competitive and flexible fit to the data.

Succinctly put, the findings of this study suggest that the NTLE distribution constitutes a useful addition to the class of logistic–exponential type models. Its relatively simple mathematical structure, combined with enhanced flexibility, makes it suitable for modelling a wide range of skewed data. Future research may explore additional inferential aspects of the distribution and applications to more complex datasets arising in modern-day fields.

\section*{Declarations}

\subsection*{Funding} No funding is received for this research. 

\subsection*{Conflict of interest} The authors declared no conflict of interest
 
\subsection*{Ethics approval and consent to participate} Not required

\subsection*{Data availability} The data used is secondary and cited appropriately in the manuscript

\bibliography{sn-bibliography}

\end{document}